\newcommand{\script}[1]{{\mathscr #1}}
\newcommand{\cC}{{\cal C}}
\newcommand{\cK}{{\cal K}}
\newcommand{\cM}{{\cal M}}
\newcommand{\cN}{{\cal N}}
\newcommand{\cP}{{\cal P}}
\newcommand{\cR}{{\cal R}}
\newcommand{\cT}{{\cal T}}
\newcommand{\cV}{{\cal V}}
\newcommand{\sC}{\script{C}}
\newcommand{\sD}{\script{D}}
\newcommand{\bX}{\mathbf{X}}
\newcommand{\bY}{\mathbf{Y}}
\newcommand{\bx}{\mathbf{x}}
\newcommand{\by}{\mathbf{y}}
\newcommand{\bm}{\mathbf{m}}
\newcommand{\bc}{\mathbf{c}}
\newcommand{\bone}{{\mathbf{1}}}
\renewcommand{\le}{\leqslant}
\renewcommand{\leq}{\leqslant}
\renewcommand{\geq}{\geqslant}
\newcommand{\rank}{\text{rank}}
\newtheorem{thm}{Theorem} 
\newtheorem{lem}{Lemma}
\newtheorem{prop}{Proposition}
\newtheorem{rmk}{Remark}
\newtheorem{definition}[thm]{Definition}
\newcommand{\be}[1]{\begin{equation}\label{#1}}
\newcommand{\ee}{\end{equation}}
\newcommand{\eq}[1]{(\ref{#1})}
\newcommand{\Tref}[1]{Theo\-rem\,\ref{#1}}
\newcommand{\Pref}[1]{Pro\-po\-si\-tion\,\ref{#1}}
\newcommand{\Lref}[1]{Lem\-ma\,\ref{#1}}
\newcommand{\Cref}[1]{Co\-ro\-lla\-ry\,\ref{#1}}
\newcommand{\Eref}[1]{(\ref{#1})}
\newcommand{\aln}[1]{\begin{align*}#1\end{align*}}
\newcommand{\al}[1]{\begin{align}#1\end{align}}
\newcommand{\ep}{\varepsilon}
\newcommand{\F}{\mathbb{F}_p}
\newcommand{\N}{{\mathcal N}}
\newcommand{\R}{{\mathbb R}}
\newcommand{\ciid}{\ol C_{\iid}}
\newcommand{\iid}{\text{i.i.d.}}
\newcommand{\ol}{\overline}
\newcommand{\bv}[1]{{\bf#1}}
\newcommand{\Infeas}{\text{Infeas}}
\begin{document}



\title{On Min-Cut Algorithms for \\ Half-Duplex Relay Networks}

\author{
   \IEEEauthorblockN{Ra\'ul Etkin}
   \IEEEauthorblockA{Samsung Information Systems America \\ raul.etkin@samsung.com} 
  \and
   \IEEEauthorblockN{Farzad Parvaresh}
   \IEEEauthorblockA{University of Isfahan \\ f.parvaresh@eng.ui.ac.ir} 
  \and
   \IEEEauthorblockN{Ilan Shomorony and Salman Avestimehr}
   \IEEEauthorblockA{Cornell University \\ is256@cornell.edu, avestimehr@ece.cornell.edu} 
 }

\author{Ra\'ul Etkin, Farzad Parvaresh, Ilan Shomorony and A. Salman Avestimehr%
\thanks{R. Etkin is with Samsung Information Systems America, San Jose, CA 95134, USA. (email: raul.etkin@samsung.com).}%
\thanks{F. Parvaresh is with the department of Electrical Engineering, University of Isfahan, Isfahan, Iran (email: f.parvaresh@eng.ui.ac.ir).}%
\thanks{I. Shomorony and A. S. Avestimehr are with the School of Electrical and Computer Engineering, Cornell University, Ithaca, NY 14853 USA (e-mails: is256@cornell.edu, avestimehr@ece.cornell.edu).}%
\thanks{Part of this paper will appear in the Proceedings of the 2013 IEEE International Symposium on Information Theory \cite{EPSA13}.}}

\maketitle

\begin{abstract}
Computing the cut-set bound in half-duplex relay networks is a challenging optimization problem, since it requires finding the cut-set optimal half-duplex schedule.
This subproblem in general involves an exponential number of variables, since the number of ways to assign each node to either transmitter or receiver mode is exponential in the number of nodes.
We present a general technique that takes advantage of specific structures in the topology of a given network and allows us to reduce the complexity of computing the half-duplex schedule that maximizes the cut-set bound (with i.i.d. input distribution).
In certain classes of network topologies, our approach yields polynomial time algorithms.
We use simulations to show running time improvements over alternative  methods and compare the performance of various half-duplex scheduling approaches in different SNR regimes.
\end{abstract}


\section{Introduction}



Recent years have seen an increasing interest in the benefits that can be obtained from the use of relays in wireless networks.
In the realm of cell phone communications, as the data demands grow at a fast pace, 
the use of techniques such as relays and femto base stations becomes essential to allow for a more efficient utilization of the wireless spectrum.
Furthermore, in wireless sensor networks, by allowing sensor nodes to operate as relays, it is possible to increase the network physical range and reduce the power that must be consumed at each node.
As a result, developing theoretical models of wireless relay networks that are meaningful in a practical context and allow us to study the fundamental limits of communication in relay networks is important for the development of future wireless technologies.

Perhaps one of the biggest discrepancies between the current theory and practice of wireless communications is whether relay nodes are \emph{full-duplex}; i.e., whether they can transmit and receive simultaneously in the same frequency band.
Even though most of the theoretical studies assume all network nodes to be full-duplex, the practical design of RF radios with that capability is very sophisticated and expensive.
Therefore, it is more practical to consider \emph{half-duplex} relay nodes; i.e., nodes that can operate either in transmit mode or in receive mode. 

The main reason for the limited adoption of the half-duplex constraint in theoretical works is the inherent combinatorial aspect it bestows on the problem.
This occurs due to the fact that designing communication schemes for wireless networks with half-duplex nodes requires scheduling the transmit and receive modes of each relay in the network.
Therefore, in order to find capacity-achieving schemes, one must inevitably solve an optimization problem that finds the optimal transmit/receive schedule. 
A simple formulation of this problem involves associating a variable $x_m \in [0,1]$ to each \emph{mode configuration} $m$.
A mode configuration defines which nodes are in transmit mode and which nodes are in receive mode.
We then want to find a schedule, i.e., a set of values for the $x_m$ variables satisfying $\sum_m x_m = 1$, which allows for the highest achievable rate of reliable data transmission.

Two main challenges lie in the way of solving this problem and, thus, characterizing the fundamental limits of communication in half-duplex networks. 
The first one is that the number of $x_m$ variables in the schedule optimization problem is $2^{|V|}$, where $V$ is the set of nodes in the network, and, as we consider larger networks, the problem quickly becomes intractable.
The second challenge is that, even for a fixed half-duplex schedule, the capacity of networks with general topology is unknown. 
This means that we cannot even characterize what a ``good'' schedule is, let alone compute it efficiently.

To overcome the difficulty imposed by the latter challenge, we consider a performance metric that is easier to characterize than the exact capacity. 
One intuitive approach is to look for the schedule 
that maximizes the \emph{cut-set bound}.
Besides being the only known general outer bound for the capacity of relay networks, recent results indicate that it is in fact a very good outer bound in many cases.
As shown in \cite{ADT11}, for linear deterministic wireless relay networks with full-duplex nodes, the capacity is exactly given by the {cut-set bound}, and for full-duplex Gaussian relay networks, the cut-set bound is a contant-gap approximation to the capacity (the value of this gap was later tightened in \cite{SuhasLatticeRelay,NoisyNetworkCoding}).
In addition, \cite{ADT11} also showed that the half-duplex schedule that maximizes the cut-set bound is approximately optimal, in the sense that it can achieve rates within a contant gap of the half-duplex capacity.

The use of the cut-set bound can be further motivated from a computational point of view.
As recently shown in \cite{PE11}, for several network models of interest (such as the linear deterministic model, the Gaussian model and the wireless erasure model \cite{HassibiErasureNet}), the cut-set bound can be efficiently computed\footnote{By restricting the transmit signals of each node to be independent, which is shown in \cite{ADT11} to be optimal for linear deterministic networks and constant-gap-optimal for Gaussian relay networks.}.
Hence, our goal is to study the problem of \emph{efficiently finding the half-duplex schedule that maximizes the cut-set bound}.
The problem of maximizing the half-duplex cut-set bound can be set up as a linear optimization problem whose variables describe the half-duplex schedule.
In general, this problem cannot be solved efficiently as a linear program since it involves an exponential number of variables and constraints.
The techniques introduced in \cite{PE11}, which exploit a submodularity property of mutual information, can be used to solve the optimization problem without evaluating each of the exponentially many cut-set constraints in the linear program.
However, we are still left with an exponential number of variables, reflecting the fact that, in a half-duplex network, the number of mode configurations is exponential in the number of nodes. 
As a result, even if we knew the solution to the optimization problem, the description of the optimal schedule would require an exponentially large amount of memory.
In order to make progress in this direction we ask the following question: is this optimization problem simplified when we consider half-duplex networks with special topology properties?
For example, for the special case of line networks, it was shown in \cite{cheaprelay} that the rate-optimal schedule can be easily computed\footnote{The emphasis in \cite{cheaprelay} is on capacity characterization rather than efficient scheduling. However, the capacity achieving half-duplex schedule for line networks can clearly be efficiently computed.}.
As another example, in \cite{BrahmaHalfDuplex}, for $N$-relay diamond networks, simple half-duplex schedules involving only two states are discussed and are shown to achieve a large fraction of the capacity for $N=2$ and $3$.

Our main result is a technique that allows us to exploit specific
properties of the network topology to reduce the number of
variables in the linear program. 
The main idea is that, if we find (possibly overlapping) subsets of nodes $V_1,V_2,...,V_k$, for some $k$, such that
\begin{itemize}
\item for any cut, the cut value is only a function of the local schedule at each of the $V_i$s,
\item given local schedules for each of the $V_i$s, a global schedule can be constructed,
\end{itemize}
then it is possible to find an equivalent formulation of the optimization problem involving a number of variables that is exponential on $\max_i |V_i|$.
Thus, provided that we can find subsets $V_1,...,V_k$ whose sizes are much smaller than the total number of nodes, the complexity reduction can be very significant.
For example, the line network considered in \cite{cheaprelay} is one special case where $V_1,V_2,...$ can be chosen as all pairs of consecutive nodes ($|V_i|=2$ for all $i$), and the problem of finding the half-duplex schedule that maximizes the cut-set bound (which, in this case, is actually capacity-optimal) can be solved in polynomial time in the size of the network.
The conditions that the $V_i$s need to satisfy will be made precise in Section \ref{mainsec}.

The remainder of the paper is organized as follows. 
In Section \ref{settingsec} we formally define the problem of optimizing the half-duplex schedule to maximize an approximation to the cut-set bound. 
We then introduce the intuition and main ideas used to solve the problem efficiently in Section \ref{examplesec}, by means of a simple example. 
In Section \ref{mainsec}, we use the intuition provided in the example to formalize the properties required from the node subsets $V_1,V_2,...,V_k$.
Then, in Section \ref{equivsec}, we state Theorem \ref{mainthm}, which provides an equivalent formulation of the half-duplex schedule optimization problem, for a given choice of node subsets $V_1,...,V_k$.
Sections \ref{equivproof} and \ref{computproof} contain the proof of Theorem \ref{mainthm}.
In Section \ref{heuristicsec}, we describe a heuristic for finding node subsets $V_1,...,V_k$ satisfying the required properties.
We then proceed to illustrating the usefulness of the result by employing it in two classes of network topologies.
These two classes can be seen as generalizing line networks in two different directions.
First, in Section \ref{layeredsec}, we consider layered networks, where the number of nodes per layer is at most logarithmic in the number of layers.
Then, in Section \ref{linesec}, we consider line networks where, in addition to an edge connecting every pair of consecutive nodes, there is an edge connecting nodes that are at a distance $2$ of each other, thus giving the network a non-layered topology.
For both of these classes, our technique allows the half-duplex schedule that maximizes the cut-set bound to be computed in polynomial time in the network size.
In Section \ref{extsec}, we briefly describe how our techniques can be extended to other network models.
We validate our results in Section \ref{sec:simulations} with a numerical analysis of our proposed algorithm compared to the current best known approach, which requires solving a problem with an exponential number of variables.
We show that our algorithm significantly outperforms the brute force approach in networks of moderate and large size. 
We conclude our paper in Section \ref{sec:conc}.



\section{Problem setting}
\label{settingsec}

%

In order to simplify the exposition, we will focus on wireless relay networks under the Gaussian model.
Most of the results we present can be easily extended to other models, in particular the linear deterministic model from \cite{ADT11} and the wireless erasure model \cite{HassibiErasureNet}.

A Gaussian relay network is defined by a directed graph $G=(V,E)$ with node set $V$ and edge set $E$, a source node $S \in V$, a destination node $D \in V$, and a real-valued channel gain $h_e$ for each edge $e \in E$.
In this work we will focus in Gaussian relay networks where the nodes are \emph{half-duplex}.
This means that each node in the network cannot simultaneously transmit and receive.
Therefore, a coding scheme for a network with half-duplex nodes must specify, besides the usual encoding, decoding and relaying functions, which nodes transmit and which nodes receive at any given time.
In other words, a coding scheme with blocklength $n$ must define, for each time $t=1,2,...,n$, a partition of the node set $V$ into a set of transmitter nodes $\cT[t]$ and a set of receiver nodes $\cR[t]$.
Thus, at time $t=1,2,...,n$, each transmitter node $v \in \cT[t]$ may transmit any real-valued signal $X_{v}[t]$, whereas a node $v \in \cR[t]$, must transmit $X_v[t] = 0$.
Then, at time $t = 1,2,...,n$, the signal received by a node $v \in \cR[t]$ is given by
\aln{
Y_v[t] = \sum_{u : (u,v) \in E} h_{u,v} X_u[t] + Z_v[t], \text{ for $t=1,2,...,n$},
}
where $Z_v[t]$ is the zero mean unit variance Gaussian discrete-time white noise process associated with node $v$, i.i.d. over time and across nodes. 
If, instead, node $v$ is in $\cT[t]$, its received signal is just $Y_v[t] = 0$.
Notice that we adopt a fixed-schedule half-duplex model, and we do not allow the transmit/receive schedule of a given node to change during communication period based on its received signals. In other words, the half-duplex schedule is independent of the transmitted and received signals.

Communication takes place over a block of $n$ discrete time-steps and follows a coding scheme which specifies the encoding function for the source $S$, decoding function for the destination $D$, and causal relaying functions for all remaining nodes $v \in V \setminus \{S,D\}$.
We assume that the encoding and relaying functions are such that the average power consumed by each node over the $n$ time-steps does not exceed $1$.
Notice that a relay network with more general power constraints (i.e., the transmit power at each node $v$ cannot exceed some $P_v$) is equivalent to another relay network with unity power constraints where the specific power constraints are absorbed into the outgoing channel gains of a given a node.
For a coding scheme with rate $\log M$, source $S$ picks a message $W \in \{1,...,M\}$ that it wishes to communicate to $D$, and transmits signals $X_{S}[t]$, $t=1,...,n$, according to its encoding function.
The destination uses a decoding function to obtain an estimate $\hat W \in \{1,...,M\}$ of $W$  
from the $n$ received signals to the source message indices.
We say that rate $\log M$ is achievable if the probability of error in the decoding of $W$ can be made arbitrarily close to $0$ by choosing a sufficiently large $n$. 
The capacity $C$ is the supremum of the achievable rates. 

While the exact  characterization of the capacity of Gaussian relay networks is a long-standing open problem even in the full-duplex case, it has been approximated to within a constant number of bits that does not depend on channel gain values \cite{ADT11,SuhasLatticeRelay,NoisyNetworkCoding}.
As shown in \cite{ADT11}, a coding scheme known as quantize-map-and-forward can achieve to within this constant number of bits of the cut-set bound.
Furthermore, it is shown that, if the cut-set bound is computed by restricting the joint distribution on the nodes' transmit signals to an i.i.d.~distribution where each node transmits an independent zero-mean unit-variance Gaussian signal, 
it remains within a constant gap of the actual cut-set bound.

The half-duplex scenario is also explored in \cite{ADT11},
where an approximation of the half-duplex capacity is established using the important
concept of a \emph{mode configuration}.
Since at each time $t$, each node $v \in V$ must be either in transmitter mode or in receiver mode, the modes of all nodes in the network at a given time $t$ can be described by a binary $|V|$-dimensional vector $\bv m$, the mode configuration vector.
Moreover, for a given coding scheme we can let $q(\bv m)$ be the fraction of the $n$ time-steps where the network uses the mode configuration described by $\bv m$.
Thus $q$ is essentially a probability mass function over the $2^{|V|}$ mode configuration vectors.
For a mode configuration $\bv m$, we will let $\cT(\bv m)$ be its transmitter nodes and $\cR(\bv m)$ be its receiver nodes.
We formally define $\cT(\bv m) = \{ v \in V \,: \, m_v = 0\}$ and $\cR(\bv m) = \{ v \in V \,: \, m_v = 1\}$.
For Gaussian relay networks with half-duplex nodes, \cite{ADT11} establishes that a constant-gap capacity approximation is given by
\al{ \label{ciihd}
\ciid = \max_{q} \min_{\Omega} \sum_{\bv m} q(\bv m) I\left(X_{\Omega \cap \cT(\bv m)} ; Y_{\Omega^c \cap \cR(\bv m)} | X_{\left(\Omega\cap \cT(\bv m)\right)^c} \right),
}
where the transmit signals $X_i$, $i \in V$, are independent and distributed as $\N(0,1)$.
The quantity $\ol C_{\iid}$ is essential to the results presented in this paper.
The technique we introduce for finding a half-duplex schedule can be described as finding the distribution $q$ which attains the maximum in \Eref{ciihd}.
We point out that, for any fixed mode configuration distribution $q$, the results in \cite{PE11} allow 
\al{ \label{mincut}
\min_{\Omega} \sum_{\bv m} q(\bv m) I\left(X_{\Omega \cap \cT(\bv m)} ; Y_{\Omega^c \cap \cR(\bv m)} | X_{\left(\Omega\cap \cT(\bv m)\right)^c} \right)
}
to be computed efficiently.
In effect, the authors show that, when the the transmit signals $X_i$, $i \in V$, are i.i.d.~Gaussian random variables, the conditional mutual information $I(X_{\Omega}; Y_{\Omega^c} | X_{\Omega^c})$ is a submodular function of $\Omega \in 2^V$.
Thus, the minimization in \Eref{mincut} can be solved efficiently (i.e., in polynomial time in the input network size).

Nonetheless, when we consider the computation $\ol C_{\iid}$, the maximization over mode configuration distributions $q(\bv m)$ adds an extra layer of difficulty to the problem.
This is easily seen by noticing that $q(\bv m)$ is a probability distribution on $2^{|V|}$ elements, rendering the brute force approach computationally inefficient.
This approach for computing $\ol C_{\iid}$ can be formally described through the following linear program, which we refer to as Problem 1.

\vspace{4mm}

{\noindent \bf Problem 1:}
\begin{align}
\label{eq:optim_prob1}
\underset{R, q(\bv m)}{\text{maximize}} \  &  R  \\ \nonumber
\text{subject to } & \\ 
\label{prob1:c0}
	& R < \sum_{\bv m \in \{0,1\}^{|V|} } 
			q\left( \mathbf{m}\right)
			I\left(X_{\Omega \cap \cT(\mathbf{m})} ; Y_{\Omega^c \cap \cR(\mathbf{m})}|X_{(\Omega \cap \cT(\mathbf{m}))^c}\right), 
				\ \;  \forall \, \Omega \in 2^V: S \in \Omega, D \not\in \Omega\\
\label{prob1:c1}
	& 0 \le q\left( \bv m \right), \forall \, \bv m \in \{0,1\}^{|V|} \\
\label{prob1:c2}
	& \sum_{ \bv m \in \{0,1\}^{|V|} } 
		q\left( \bv m \right) = 1 \\
&\text{$X_i$, $i \in V$, are independent $\N(0,1)$ random variables} \nonumber
\end{align}
\begin{rmk}
We point out that in Problem 1 there is no optimization of the power allocation of the nodes over the various modes. While power optimization may lead to a smaller gap to the cut-set bound it would also destroy the convex structure of the problem. 
In addition, the analysis of channels subject to the half-duplex constraint is more meaningful in the high SNR regime, where the channel capacity is degrees-of-freedom limited. In this regime optimization over the power allocations would not provide significant performance improvements. 
\end{rmk}

Notice that, although Problem 1 is defined as a linear program, it involves $O(2^{|V|})$ constraints and variables and, therefore, cannot be solved efficiently.
Moreover, we emphasize that even though Problem 1 is set up as a maximization over $R$, it is not actually computing achievable rates.
In fact, its main outcome is the maximizing distribution $q$, and Problem 1 should be understood as a tool to compute a ``good'' half-duplex schedule for a wireless relay network.

Problem 1 can be generalized to find ``good'' half-duplex schedules for a wireless relay network where the goal is not only to maximize the achievable rate, but also to minimize the duty cycle of the nodes, which can be related to the power consumption of the network. This generalization is given by the following linear program,
\vspace{4mm}

{\noindent \bf Problem 2:}
\begin{align}
\label{eq:optim_prob2}
\underset{R, T_{tot}, q(\bv m)}{\text{minimize}} \  &  \mu_1 R + \mu_2 T_{tot} \\ \nonumber
\text{subject to } & \\ 
\label{prob2:c0}
	& R < \sum_{\bv m \in \{0,1\}^{|V|} } 
			q\left( \mathbf{m}\right)
			I\left(X_{\Omega \cap \cT(\mathbf{m})} ; Y_{\Omega^c \cap \cR(\mathbf{m})}|X_{(\Omega \cap \cT(\mathbf{m}))^c}\right), 
				\ \;  \forall \, \Omega \in 2^V: S \in \Omega, D \not\in \Omega\\
\label{prob2:c1}
	& 0 \le q\left( \bv m \right), \forall \, \bv m \in \{0,1\}^{|V|} \\
\label{prob2:c2}
	& \sum_{ \bv m \in \{0,1\}^{|V|} } 
		q\left( \bv m \right) = 1 \\
\label{prob2:c3}
&  \sum_{i=1}^{|V|} \sum_{\bm \in \{0,1\}^{|V|}:m_i=1} q(\bm) \le T_{tot}  \\
\label{prob2:c4}
&	C_{min} \le R \\
&\text{$X_i$, $i \in V$, are independent $\N(0,1)$ random variables} \nonumber
\end{align}
for fixed constants $\mu_1$, $\mu_2$, and $C_{min}$, which can be chosen appropriately to find good schedules for different objectives. For example, setting $\mu_1=-1$, $\mu_2=0$, $C_{min}=0$, Problem 2 maximizes the rate subject to the half duplex constraint and reduces to Problem 1. Alternatively, setting $\mu_1=0$, and $\mu_2=1$, for some $C_{min} > 0$ Problem 2 minimizes the sum of the duty cycles of the relay nodes subject to a minimum rate contraint for the half duplex network.

Our main goal will be to find equivalent formulations of Problem 2 which can be efficiently computed.
We will show that special structures in the network topology can be exploited in order to replace constraints (\ref{prob2:c0}), (\ref{prob2:c1}) and (\ref{prob2:c2}) with equivalent ones and to reduce the number of variables, which leads, in some cases, to polynomial-time algorithms for solving Problem 2.


\section{A simple example} \label{examplesec}
We will illustrate the main ideas presented in this paper through a simple example. We consider a single source single destination Gaussian relay network with five relays, as shown in Figure \ref{fig:network_1}. 
\begin{figure}[htb] \center
		\includegraphics[height=4cm]{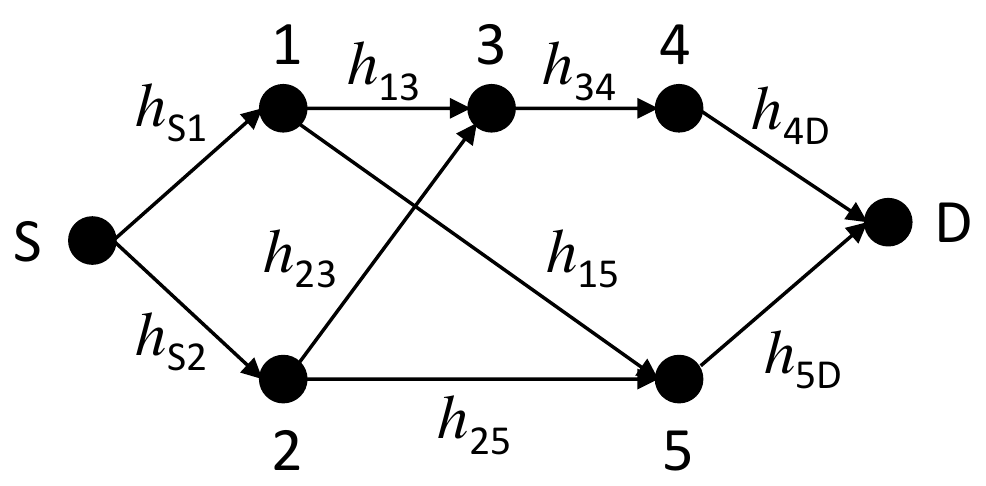}
	\caption{A Gaussian relay network with 5 relays.}
	\label{fig:network_1}
\end{figure}
The set of nodes in this network is $V=\{S, 1, 2, 3, 4, 5, D \}$. 
As explained in the previous section, the half-duplex capacity of this network can be characterized within a constant gap by computing the cut-set bound with independent Gaussian encoding at the various nodes. 
Therefore, the schedule $q$ computed via Problem 1 is approximately optimal.

The computational complexity in solving Problem 1 for the network in Figure \ref{fig:network_1} arises from the $2^5=32$ inequalities of the form (\ref{prob1:c0}) involving all the cuts in the network, and the $2^5=32$ scalar variables required to represent the distribution $q(\bm)$ of the half-duplex schedules. 
We would like to find a more efficient approach for solving this optimization. 
As shown in \cite{PE11}, it is possible to exploit a submodularity property of mutual information to simplify the computation of the min cut, and as a result, it is not necessary to compute (\ref{prob1:c0}) for each of the 32 cuts. 
To further simplify the computation we will show that it is possible to reduce the number of optimization variables represented by $q(\bm)$.
\begin{figure}[htb] \center
		\includegraphics[height=6cm]{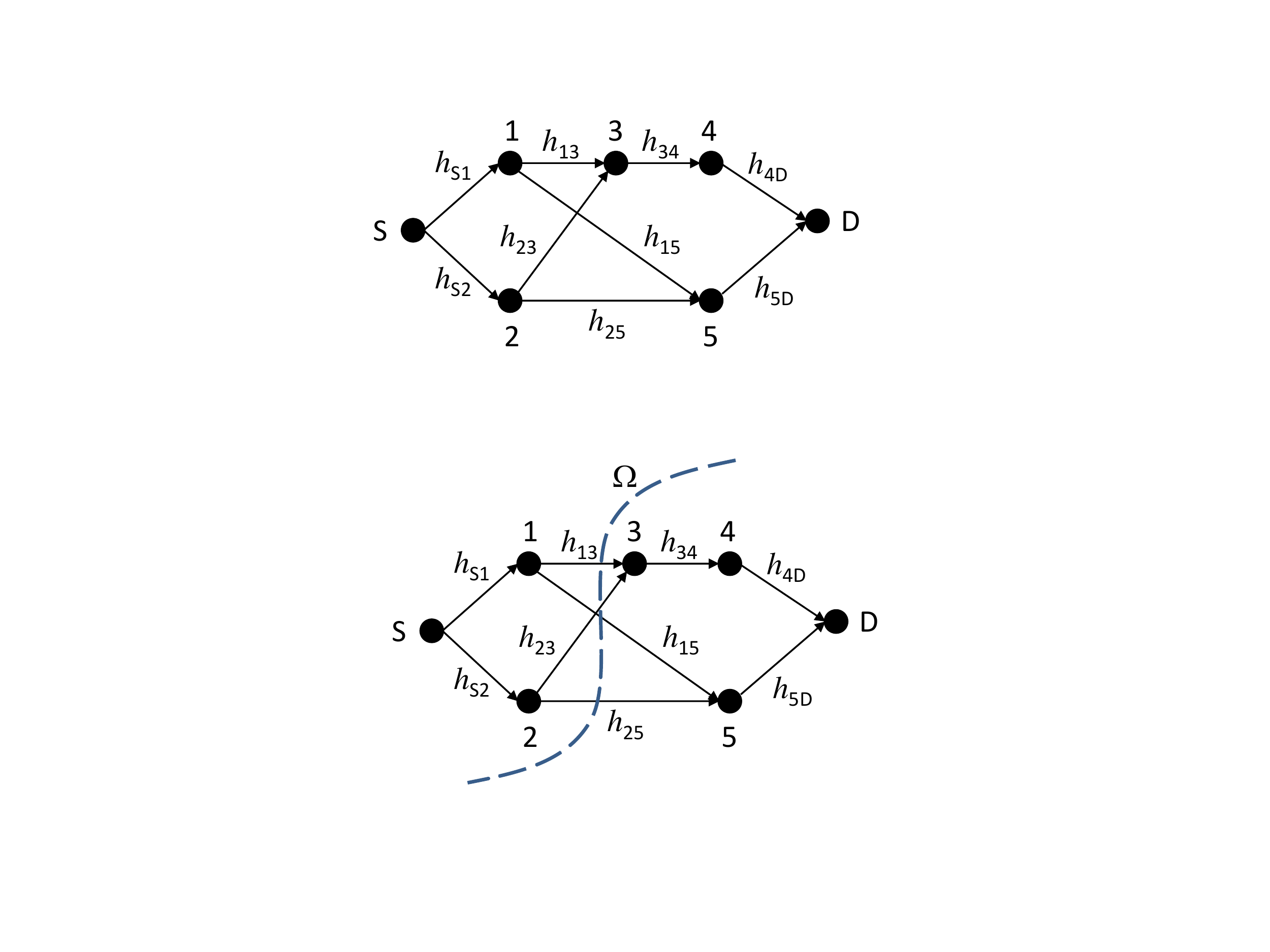}
	\caption{A cut $\Omega$ in the network.}
	\label{fig:cut_1}
\end{figure}
Consider the cut $\Omega=\{S, 1, 2\}$ as shown in Figure \ref{fig:cut_1}, and the corresponding inequality of the form (\ref{prob1:c0})
\[
R < \sum_{\mathbf{m}\in \{0,1\}^5 } 
			q\left( \mathbf{m}\right)
			I\left(X_{\{S, 1, 2\} \cap \cT(\mathbf{m})} ; Y_{\{3, 4, 5, D\} \cap \cR(\mathbf{m})}|X_{(\{S, 1, 2\} \cap \cT(\mathbf{m}))^c}\right). 
\]
Due to the network connectivity, the capacity of this cut only depends on the variables (input, output, and transmit/receive mode) of the nodes 1, 2, 3, and 5. As a result, the inequality can be rewritten as
\begin{align*}
R <& \sum_{\mathbf{m}\in \{0,1\}^5 } 
			q\left( \mathbf{m}\right)
			I\left(X_{\{1, 2\} \cap \cT_{12}(m_1,m_2)} ; Y_{\{3, 5\} \cap \cR_{35}(m_3, m_5)}|X_{ \{1,2\} \cap \cR_{12}(m_1,m_2))}\right)
			\\
=&	\sum_{(m_1,m_2,m_3,m_5)\in \{0,1\}^4 } 
			\left[\sum_{m_4 \in \{0,1\}} q(m_1,\ldots, m_5)\right]
			I\left(X_{\{1, 2\} \cap \cT_{12}(m_1,m_2)} ; Y_{\{3, 5\} \cap \cR_{35}(m_3, m_5)}|X_{ \{1,2\} \cap \cR_{12}(m_1,m_2)}\right)
			\\
=&	\sum_{(m_1,m_2,m_3,m_5)\in \{0,1\}^4 } 
			q_{1235}(m_1, m_2, m_3, m_5)
			I\left(X_{\{1, 2\} \cap \cT_{12}(m_1,m_2)} ; Y_{\{3, 5\} \cap \cR_{35}(m_3, m_5)}|X_{ \{1,2\} \cap \cR_{12}(m_1,m_2)}\right)
\end{align*}
where we denoted by $\cT_{i_1,\ldots,i_k}(m_{i_1},\ldots, m_{i_k})=\cT(\bm) \cap \{i_1,\ldots,  i_k\}$ and  $\cR_{i_1,\ldots,i_k}(m_{i_1},\ldots, m_{i_k})= \cR(\bm) \cap \{i_1,\ldots, i_k\}$,  and defined $q_{1235}(m_1, m_2, m_3, m_5)=\sum_{m_4 \in \{0,1\}} q(m_1,\ldots, m_5)$.
\begin{figure}[htb] \center
		\includegraphics[height=6cm]{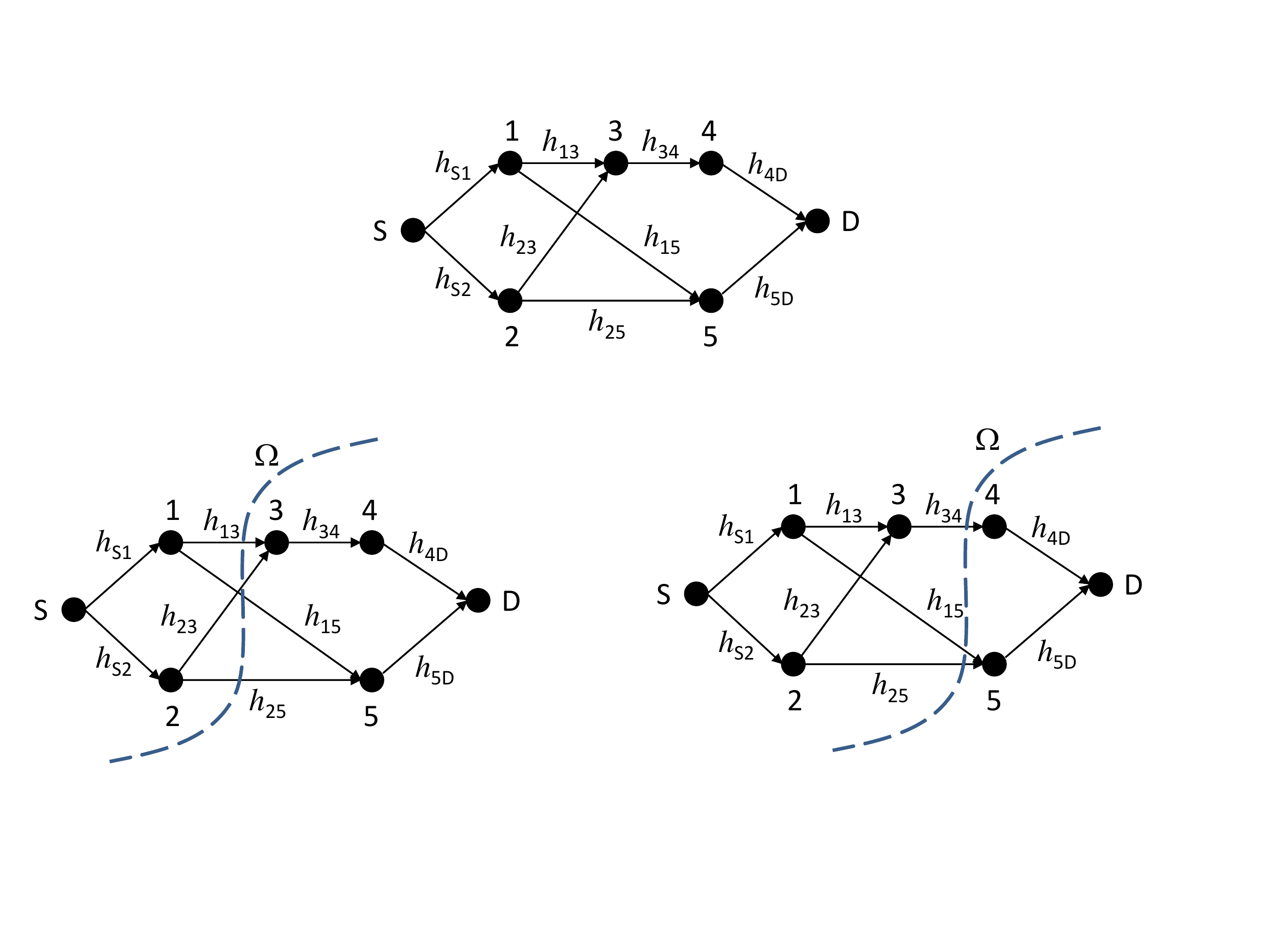}
	\caption{Another cut $\Omega$ in the network.}
	\label{fig:cut_2}
\end{figure}
In a similar way, consider the cut $\Omega=\{S, 1, 2, 3\}$ as shown in Figure \ref{fig:cut_2}. 
Proceeding as before, we observe that the rate inequality (\ref{prob1:c0}) can be rewritten as
\begin{align*}
R <	&\sum_{(m_1,m_2, m_5)\in \{0,1\}^3 } 
			q_{125}(m_1, m_2, m_5)
			I\left(X_{\{1, 2\} \cap \cT_{12}(m_1,m_2)} ; Y_{\{5\} \cap \cR_5(m_5)}|X_{\{1,2\} \cap \cR_{12}(m_1,m_2)}\right)
\\
&+ \sum_{(m_3, m_4)\in \{0,1\}^2 } 
			q_{34}(m_3, m_4)
			I\left(X_{\{3\} \cap \cT_3(m_3)} ; Y_{\{4\} \cap \cR_4(m_4)}\right)
\end{align*}
where we denoted by $q_{125}$ and $q_{34}$ the corresponding marginals of $q$.

By considering each of the 32 cuts, we observe that all the cut capacities can be decomposed into terms that are completely determined by the marginals $q_{1235}$ and $q_{34}$. Since $q_{1235}$ and $q_{34}$ must correspond to some joint distribution $q$, they must satisfy 
\[
\sum_{(m_1,m_2,m_5)\in \{0,1\}^3} q_{1235}(m_1,m_2,m_3,m_5) = \sum_{m_4 \in\{0,1\}} q_{34}(m_3,m_4).
\]
Assuming that this consistency requirement is satisfied, we can find a joint distribution $\tilde{q}$ with these marginals:
\[
\tilde{q}(m_1,\ldots,m_5)=\frac{q_{1235}(m_1,m_2,m_3,m_5)\cdot q_{34}(m_3,m_4)}{q_3(m_3)}
\]
where $q_3(m_3)=\sum_{m_4 \in\{0,1\}} q_{34}(m_3,m_4)$. 
While $\tilde{q}$ may be different from $q$, they both achieve the same rate in (\ref{eq:optim_prob1}). 
This can be seen by noting that both distributions result in the same cut capacities. 
As a result, we can simplify the Problem 1 obtaining
\begin{align}
\label{eq:optim_example_2}
\underset{R, q_{1235}, q_{34}}{\text{maximize}} &\    R  \\ \nonumber
\text{subject to }&  \\ 
\label{optex:c0_2}
	& R < \sum_{\mathbf{m}\in \{0,1\}^4 } 
			q_{1235}(\mathbf{m})
			I\left(X_{(\Omega \backslash \{3\}) \cap \cT_{1235}(\mathbf{m})} ; Y_{\Omega^c \cap \cR_{1235}(\mathbf{m})}|X_{((\Omega \backslash \{3\})\cap \cT_{1235}(\mathbf{m}))^c}\right)
\\
&\quad \quad \quad  + \sum_{\mathbf{m}\in \{0,1\}^2 } 
			q_{34}(\bm)	I\left(X_{\Omega \cap \{3\} \cap \cT_{34}(\bm)} ; Y_{\Omega^c \cap \{4\} \cap \cR_{34}(\bm)} \right), 
\quad \forall \; \Omega \in 2^V: S \in \Omega, D \not\in \Omega \nonumber \\
\label{optex:c1_2}
	& 0 \le q_{1235}\left(m_1,m_2,m_3,m_5\right), \text{ for all } (m_1,m_2,m_3,m_5) \in \{0,1\}^4 \\
\label{optex:c1_3}
	& 0 \le q_{34}\left(m_3,m_4\right), \text{ for all } (m_3,m_4) \in \{0,1\}^2 \\
\label{optex:c2_2}
	& \sum_{ (m_1, m_2, m_3, m_5) \in \{0,1\}^4 } 
		q_{1235}\left(m_1, m_2, m_3, m_5\right) = 1 \\
\label{optex:c2_3}
	& \sum_{ (m_3 m_4) \in \{0,1\}^2 } 
		q_{34}\left(m_3, m_4\right) = 1 \\
\label{optex:c3}
	& \sum_{(m_1,m_2,m_5)\in \{0,1\}^3} q_{1235}(m_1,m_2,m_3,m_5) = \sum_{m_4 \in\{0,1\}} q_{34}(m_3,m_4), \text{ for all } m_3 \in \{0,1\}.
\end{align}

The main simplification in the optimization problem above arises from the smaller number of variables in $q_{1235}$ and $q_{34}$, with a total of $2^4+2^2=20$, as compared to $2^5=32$ appearing in $q$ in (\ref{eq:optim_prob1}).
While in this simple network the computational complexity savings are small, this approach can lead to substantial simplification of the problem in large networks.




\section{Finding Half-Duplex Schedules in General Relay networks} \label{mainsec}

The main contribution of this paper is in developing techniques to more efficiently solve Problem 2.
Based on the intuition provided by the example in the previous section, we will show that, given a \emph{node grouping} satisfying some properties, the computational complexity of Problem 2 can be significantly improved.
By a node grouping we simply mean a list of (possibly overlapping) subsets $V_1, ..., V_k$ of the nodes $V$ such that $\cup_i V_i = V$.

In this section, we will state and prove our main results.
We will first show that, for a given node grouping $V_1,...,V_k$, Problem 2 can be re-written as another optimization problem, which we call Problem 3, with a smaller number of variables.
This result is stated in Subsection \ref{equivsec} and proved in Subsections \ref{equivproof}  and \ref{computproof}.
Finally, in Subsection \ref{heuristicsec}, we discuss the problem of finding a good node grouping and we provide a heuristic that finds a valid node grouping (as defined in Subsection \ref{equivsec}) for an arbitrary network.

\subsection{Finding an Equivalent Optimization Problem Given a Node Grouping} \label{equivsec}

In order to generalize the ideas shown in the example in section \ref{examplesec} to other networks, we identify the main elements that led to the simplification of the optimization problem.
\begin{itemize}
\item The cut capacities for all cuts depend on marginals $q_{V_1},\ldots, q_{V_k}$ of $q$ of (significantly) smaller dimension, i.e. $\max_i |V_i| \ll |V|$, where $V$ is the set of nodes and $V_i$s are the subsets of nodes corresponding to the marginals. 
\item Given an arbitrary set of probability distributions $\{q_{V_1},\ldots, q_{V_k}\}$ satisfying some consistency constraints, there exists a joint distribution with these probability distributions as marginals. 
\end{itemize}
In order to characterize when the cut capacities only depend on marginals $q_{V_1},\ldots, q_{V_k}$, we consider defining, for a given cut $\Omega$, the cut graph $G_\Omega=(V,E_\Omega)$ with $E_\Omega=\{(i,j)\in E: i\in \Omega, j \in \Omega^c\}$. 
That is, $G_\Omega$ is the subgraph of $G$ obtained by keeping only the arcs going from nodes in $\Omega$ to nodes in $\Omega^c$. 
In addition, let $G_{\Omega,i}=(V_{\Omega,i},E_{\Omega,i})$ be the $i$th connected component of $G_\Omega$, for $i=1,\ldots, N(\Omega)$, where $N(\Omega)$ is the number of such connected components. 

The consistency constraints in the second bullet can be formalized by requiring that, for two sets $V_i$ and $V_j$ and respective probability distributions $q_{i}$ and $q_{j}$ defined over these sets, their marginals corresponding to the node set $V_i \cap V_j$ are the same.
Thus, we are interested in a node grouping $V_1, V_2, \ldots, V_k \subseteq V$ satisfying
\begin{enumerate}[P1. ]
\item $V_{\Omega,i} \subseteq V_j$ for some $j \in \{1,\ldots, k\}$, for all connected components $V_{\Omega,i}$, for all cuts $\Omega$.
\item Given an arbitrary set of consistent distributions $\{q_{1},\ldots, q_{k}\}$, there exists a joint distribution $\tilde q$ with these distributions as marginals.
\end{enumerate}
\begin{rmk}
The number of distributions $k$ will be assumed to be smaller than or equal to the number of nodes in the network $|V|$. This assumption will be justified in Section \ref{heuristicsec}.
\end{rmk}
Now, for a general network $G$ and node subsets $V_1,...,V_k$ satisfying properties P1 and P2, by letting $\cT(\Omega,i,\mathbf{m})=\{v \in V_{\Omega,i}:  m_v = 1\}$, $\cR(\Omega,i,\mathbf{m})=\{v \in V_{\Omega,i} :  m_v = 0\}$, for any $U \subseteq V$,
$r(\Omega,i)$ be a function that returns the index $j$ of a set $V_j$ containing $V_{\Omega, i}$, $s(i)$ be a function that returns the index $j$ of a set $V_j$ containing node $i$, and $t(i)$ be a function that returns the coordinate index in the vector $\bm$ associated to the transmit/receive mode of node $i$ in $q_{s(i)}(\bm)$,
we can define the following optimization problem,

\vspace{3mm}

{\noindent \bf Problem 3:}
\begin{align}
\label{eq:optim_prob3}
\underset{R, T_{tot}, \{q_i\}_{i=1}^k}{\text{minimize}}& \  \mu_1 R +\mu_2 T_{tot}   \\
\text{subject to}  &   \nonumber \\
\label{prob3:c0}
&   R < \sum_{i=1}^{N(\Omega)} \sum_{\mathbf{m}\in \{0,1\}^{|V_{r(\Omega,i)}|  }} 
	q_{r(\Omega,i)}\left( \mathbf{m} \right) 
		\cdot 
	I\left( X_{\Omega \cap \cT(\Omega,i,\mathbf{m})} ; Y_{\Omega^c \cap \cR(\Omega,i,\mathbf{m})} | X_{(\Omega \cap \cT(\Omega,i,\mathbf{m}))^c} \right), \nonumber \\
	& \quad \quad \quad  \forall \, \Omega \in 2^V: S \in \Omega, D \not\in \Omega, \\
\label{prob3:c1}
	&  0 \le q_i\left(\mathbf{m}\right), \mathbf{m}\in \{0,1\}^{|V_i|},  
		\ \text{ for } i=1,2 \ldots, k \\
\label{prob3:c2}
	&  \sum_{\mathbf{m}\in \{0,1\}^{|V_i|}} 
		q_i\left(\mathbf{m}\right) = 1,
			\ \text{ for } i=1,2, \ldots, k \\
\label{prob3:c3}
	&  \sum_{\mathbf{m}\in \{0,1\}^{|V_i|} :\, (m_j)_{j\in V_i \cap V_l}=\mathbf{m}_1} 
			q_{i}\left(\mathbf{m}\right) 
		=
		\sum_{\mathbf{m}\in \{0,1\}^{|V_l|} :\, (m_j)_{j\in V_i \cap V_l}=\mathbf{m}_1} 
			q_{l}\left(\mathbf{m}\right)  \\
& \quad \quad \text{ for all } \mathbf{m}_1 \in \{0, 1\}^{|V_i \cap V_l|}, \text{ for } i\ne l \text{ and } i, l \in \{1,2, \ldots, k\}, \nonumber\\
\label{prob3:c4}
	&  \sum_{i=1}^{|V|} \sum_{\bm \in \{0,1\}^{|V_{s(i)}|}:m_{t(i)}=1} q_{s(i)}(\bm) \le T_{tot} \\
\label{prob3:c5}
&	C_{min} \le R \\
&\text{$X_i$, $i \in V$, are independent $\N(0,1)$ random variables} \nonumber
\end{align}
for fixed constants $\mu_1$, $\mu_2$, and $C_0$, which can be chosen appropriately to find good schedules for different objectives. 

The following theorem establishes that Problem 3 can be seen as a simplification of Problem 2.

\begin{thm} 
\label{mainthm}
For a Gaussian relay network given by graph $G=(V,E)$ with node subsets $V_1,...,V_k$ satisfying P1 and P2, the optimal solutions of Problems 2 and 3 are the same.
Moreover, Problem 3 has $O(|V| 2^{\max_i |V_i|})$ variables (as opposed to the $O(2^{|V|})$ variables in Problem 2), and the optimization can be solved in a time complexity that is polynomial in $|V| 2^{\max_i |V_i|}$.
\end{thm}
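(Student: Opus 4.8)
The plan is to separate the statement into two claims—that Problems 2 and 3 have the same optimum (with optimizers in correspondence), and that Problem 3 has $O(|V|\,2^{\max_i|V_i|})$ variables and is solvable in time polynomial in this quantity—and to reduce the whole equivalence to a single structural lemma on how a cut value decomposes over the connected components of the cut graph $G_\Omega$. Once that lemma and properties P1 and P2 are in hand, both directions of the equivalence and the complexity bound become essentially bookkeeping.

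The crux is the following decomposition. For any cut $\Omega$ and any fixed mode configuration $\mathbf{m}$, I would show
\[
I\!\left(X_{\Omega \cap \cT(\mathbf{m})}; Y_{\Omega^c \cap \cR(\mathbf{m})}\,\middle|\,X_{(\Omega \cap \cT(\mathbf{m}))^c}\right)
= \sum_{i=1}^{N(\Omega)} I_i,
\]
where the $i$th summand depends only on the coordinates of $\mathbf{m}$ indexed by $V_{\Omega,i}$. The reason is that the $X_v$ are independent and receive-mode nodes transmit zero, so the only active edges crossing the cut are those retained in $G_\Omega$; since distinct components $G_{\Omega,i}$ share no such edge, the induced Gaussian MIMO channel is block-diagonal and the resulting $\log\det$ (hence the conditional mutual information) adds across blocks, block $i$ involving only the nodes of $V_{\Omega,i}$. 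Averaging against any distribution replaces each block's weight by the marginal on $V_{\Omega,i}$, and property P1 guarantees $V_{\Omega,i}\subseteq V_{r(\Omega,i)}$, so this marginal is itself a marginal of $q_{r(\Omega,i)}$ and the average matches the right-hand side of \eqref{prob3:c0} exactly. This lemma is where I expect the main difficulty to lie: one must verify carefully that conditioning on $X_{(\Omega\cap\cT(\mathbf{m}))^c}$ does not couple the components and that the block structure survives the half-duplex masking.

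For the equivalence I would exhibit an objective-preserving correspondence in both directions. Forward: given $q$ feasible for Problem 2, let $q_i$ be its marginal on $V_i$; the consistency constraints \eqref{prob3:c3} hold automatically (both sides are the $V_i\cap V_l$-marginal of the same $q$), nonnegativity and normalization pass to marginals, the duty-cycle term \eqref{prob3:c4} equals \eqref{prob2:c3} because $P(m_i=1)$ is unchanged under marginalization onto any $V_{s(i)}\ni i$, and by the lemma the cut constraints coincide, so the same $(R,T_{tot})$ is feasible at the same cost. Reverse: given $\{q_i\}$ feasible for Problem 3, the consistency constraints let me invoke P2 to obtain a joint $\tilde q$ with marginals $q_i$; the lemma then shows $\tilde q$ satisfies \eqref{prob2:c0} with the same value, and the remaining constraints and objective transfer verbatim. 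Since each direction preserves the objective, the optimal values agree and the optimizers correspond via marginalization and the P2-reconstruction.

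For the complexity claim the variable count is immediate: Problem 3 has $R$, $T_{tot}$, and the $q_i(\mathbf{m})$, totaling $\sum_{i=1}^k 2^{|V_i|}\le k\,2^{\max_i|V_i|}\le |V|\,2^{\max_i|V_i|}$ by the remark $k\le|V|$, and constraints \eqref{prob3:c1}--\eqref{prob3:c5} other than the cut constraints are polynomial in this quantity. The only obstacle is the exponential family \eqref{prob3:c0}, which I would dispatch by the ellipsoid method with a polynomial-time separation oracle: for fixed $(R,\{q_i\})$, the right-hand side of \eqref{prob3:c0} as a function of $\Omega$ is exactly the cut value under $\tilde q$ and can be evaluated for any given $\Omega$ in polynomial time from the marginals (compute the components of $G_\Omega$ and sum, per component, the at most $2^{\max_i|V_i|}$ weighted $\log\det$ terms). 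Because each per-mode cut term is submodular in $\Omega$—the i.i.d.-Gaussian submodularity result of \cite{PE11}, applied to the network with receive-mode edges deleted—the nonnegative combination is submodular, so minimizing it over cuts with $S\in\Omega$, $D\notin\Omega$ is a constrained submodular minimization solvable in polynomial time; comparing the minimum with $R$ either certifies feasibility or returns a violated cut. This separation oracle together with the polynomial variable count lets the ellipsoid method solve Problem 3 in time polynomial in $|V|\,2^{\max_i|V_i|}$.
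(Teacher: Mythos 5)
Your proposal is correct and follows essentially the same route as the paper: the same component-wise decomposition of the cut value over $G_\Omega$ combined with P1 and P2 for the two-way feasibility correspondence, and the same reliance on submodularity from \cite{PE11} to minimize the cut expression over $\Omega$ as the key to handling the exponentially many constraints. The only cosmetic difference is that you invoke the ellipsoid method with a separation oracle directly, whereas the paper packages the identical ingredients (convexity, an infeasibility measure, polynomial growth, and a polynomially bounded feasible set) into the framework of \cite[Theorem 5.3.1]{ConvOpt}.
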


\begin{rmk} \Tref{mainthm} can be specialized to simply maximize $\ciid$ by choosing $\mu_1=-1$, $\mu_2=0$ and $C_0=0$.
Similarly, it can specialized to minimize the sum of the duty cycles of the nodes to achieve a minimum value of $\ciid$, by choosing $\mu_1=0$, $\mu_2=1$ and $C_0>0$.
\end{rmk}
%
%

\vspace{2mm}

In the following two subsections we prove Theorem \ref{mainthm}.
First we prove that the solutions of Problems 2 and 3 are equal.
Then we prove that Problem 3 can be solved in a time complexity that is polynomial in $|V| 2^{\max_i |V_i|}$.

\subsection{Proof of the equivalence of Problems 2 and 3}
\label{equivproof}

For any cut $\Omega$ that includes the source and excludes the destination, under the assumption of independent encoding, the cut capacity $I(X_\Omega; Y_{\Omega^c} | X_{\Omega^c})$ can be computed over the subgraph $G_\Omega$, which we denote by $I_{G_\Omega}(X_\Omega; Y_{\Omega^c} | X_{\Omega^c})$, by setting $h_{i,j}=0$ when $(i,j) \not \in E_{\Omega,i}$. 
In this subgraph, with independent encoding, we have $I_{G_\Omega}(X_\Omega; Y_{\Omega^c} | X_{\Omega^c})=I_{G_\Omega}(X_\Omega; Y_{\Omega^c})$. 
In addition, since $G_\Omega$ has connected components $\{G_{\Omega,i}\}_i$, we have $I_{G_\Omega}(X_\Omega; Y_{\Omega^c})=\sum_i I_{G_{\Omega,i}}(X_{\Omega}; Y_{\Omega^c })=\sum_i I(X_{\Omega\cap V_{\Omega,i}}; Y_{\Omega^c \cap V_{\Omega,i}}|X_{\Omega^c})$.
Therefore, we can decompose the cut capacity of any cut $\Omega$ that includes the source and excludes the destination for any mode configuration $\bm$ as follows:
\begin{equation}
I\left(X_{\Omega\cap \cT(\bm)}; Y_{\Omega^c \cap \cR(\bm)} | X_{(\Omega\cap \cT(\bm))^c}\right)  
=\sum_{i=1}^{N(\Omega)} I\left(X_{\Omega \cap V_{\Omega,i} \cap \cT(\bm)}; Y_{\Omega^c \cap V_{\Omega,i} \cap \cR(\bm)} | X_{(\Omega \cap V_{\Omega,i} \cap \cT(\bm))^c}\right).
\label{eq:I_split}
\end{equation}
where we added conditioning to eliminate the influence of the transmit variables $X_i$ associated to nodes in receive mode.
Note that in \eq{eq:I_split} for given cut $\Omega$ and connected component $i$ the corresponding mutual information only depends on the mode configuration of the nodes in $V_{\Omega, i}$. 
As a result, we can write
\begin{align}
\sum_{\bm \in \{ 0, 1 \}^{|V|} }  &
	q\left(\bm \right) 
	I\left(X_{\Omega\cap \cT(\bm)}; Y_{\Omega^c \cap \cR(\bm)} | X_{(\Omega\cap \cT(\bm))^c}\right)  =   \nonumber \\
& = \sum_{ \bm \in \{ 0,1 \}^{|V|} } 
	q(\bm) 
	\sum_{i=1}^{N(\Omega)} I\left(X_{\Omega \cap V_{\Omega,i} \cap \cT(\bm)}; Y_{\Omega^c \cap V_{\Omega,i} \cap \cR(\bm)} | X_{(\Omega \cap V_{\Omega,i} \cap \cT(\bm))^c}\right) \nonumber \\
& = \sum_{i=1}^{N(\Omega)} \sum_{ \bm \in \{ 0,1 \}^{|V|} } 
		q(\bm) 
		I\left(X_{\Omega \cap V_{\Omega,i} \cap \cT(\bm)}; Y_{\Omega^c \cap V_{\Omega,i} \cap \cR(\bm)} | X_{(\Omega \cap V_{\Omega,i} \cap \cT(\bm))^c}\right) \nonumber \\
& =  \sum_{i=1}^{N(\Omega)} \sum_{\mathbf{m}\in \{0,1\}^{|V_{r(\Omega,i)}|  }} 
	q_{r(\Omega,i)}\left( \mathbf{m} \right) 
		\cdot 
	I\left( X_{\Omega \cap \cT(\Omega,i,\mathbf{m})} ; Y_{\Omega \cap \cR(\Omega,i,\mathbf{m})} | X_{(\Omega \cap \cT(\Omega,i,\mathbf{m}))^c} \right).
\label{eq:equivalents_of_I_2}
\end{align}

Equality \eq{eq:equivalents_of_I_2} shows that condition \eq{prob3:c0} is equivalent to condition \eq{prob2:c0}.
Thus in order to compute the cut-set capacity for any cut $\Omega$ we only need to know the marginals $q_1,...,q_k$ of the probability distribution $q$.

Next, we observe that any probability distribution $q(\bm)$ which satisfies  conditions \eq{prob2:c1} and \eq{prob2:c2} also satisfies conditions 
\eq{prob3:c1}, \eq{prob3:c2} and \eq{prob3:c3}, where $q_i(\bm_1) = \sum_{\bm \in \{0,1\}^n :\, (m_j)_{j \in V_i }=\bm_1} q\left(\bm \right)$, for $\bm_1 \in \{0,1\}^{|V_i|}$, and $i=1,\ldots, k$. 

As a result, the solution to Problem 3 is no larger than the solution to Problem 2.
In addition, any set of marginals $q_i(\cdot)$, $i=1,\ldots, k$ that satisfy conditions \eq{prob3:c1}, \eq{prob3:c2} and \eq{prob3:c3} are consistent, and, by property P2, there exists a joint distribution $\tilde q$ having the $q_i$s as marginals.
As a result, any feasible solution in Problem 2 is a feasible solution in Problem 3.
Therefore, the solution to Problem 2 is no larger than the solution to Problem 3, and it follows that they have the same solution.

The variables in Problem 3 correspond to the the values of $q_i(\bm)$ for each $\bm \in \{0,1\}^{|V_i|}$ and each $i = 1,...,k$, in addition to $R$ and $T_{tot}$, and thus the number of variables is $2+\sum_{i=1}^k 2^{|V_i|} = O(|V| \cdot 2^{\max_i |V_i|})$.

\subsection{Time Complexity of Solving Problem 3} \label{computproof}

The computability of the solution in polynomial time in $|V| 2^{\max_i |V_i|}$ will be shown by extending the results of \cite{PE11}. We will use the following definitions and result. The reader is referred to \cite{ConvOpt} for more details.

\begin{definition}[Polynomial computability] 
\label{def:1}
A family of optimization programs is polynomially computable if:
\begin{itemize} 
\item [(i)] for any instance of the program and any point $\bx$ in the domain, the objective and its subgradient can be computed in polynomial time in the size of the instance. 
\item [(ii)] for a given measure of infeasibility $\Infeas(\cdot)$, it should be possible to determine if $\Infeas(\bx)\le \ep$ in polynomial time, and when this inequality is not satisfied, it should be possible to find in polynomial time a vector $\bc$ such that
\[
\bc^T \bx > \bc^T \by, \forall \by: \Infeas(\by) \le \ep.
\]
\end{itemize}
\end{definition}

\begin{definition}[Polynomial growth]
\label{def:2}
A family of optimization programs has polynomial growth if the objectives and the infeasibility measures as functions of points $\bx$ in the domain grow polynomially with $\|\bx\|_1$.
\end{definition}

\begin{definition}[Polynomial boundedness of feasible sets]
\label{def:3}
A family of optimization programs has polynomially bounded feasible sets if the feasible set of an instance of the program is contained in an Euclidean ball centered at the origin with radius that grows at most polynomially with the size of the instance.
\end{definition}

\begin{prop}[{\cite[Theorem 5.3.1]{ConvOpt}}]
\label{prop:convopt}
Let $\cP$ be a family of convex optimization programs equipped with infeasibility measure $\Infeas(\cdot)$. Assume that the family is polynomially computable with polynomial growth and with polynomially bounded feasible sets. Then $\cP$ is polynomially solvable.
\end{prop}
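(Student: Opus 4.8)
The plan is to prove Proposition~\ref{prop:convopt} by the \emph{ellipsoid method} (a localization/cutting-plane scheme), using the separation oracle that polynomial computability furnishes. Fix an instance of dimension $n$ (the number of variables). By polynomial boundedness of feasible sets, the $\ep$-feasible set $X_\ep = \{\bx : \Infeas(\bx) \le \ep\}$ lies inside a Euclidean ball $E_0$ centered at the origin whose radius $R$ is polynomial in the instance size, and I initialize the localizing ellipsoid to $E_0$. The invariant maintained is that each ellipsoid $E_t$ contains every $\ep$-feasible point whose objective value improves on the best value recorded so far; in particular it always contains the set of optimal $\ep$-feasible points.

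First I would describe one iteration. Let $\bx_t$ be the center of $E_t$. Using part (ii) of polynomial computability, I test in polynomial time whether $\Infeas(\bx_t) \le \ep$. If not, the oracle returns a vector $\bc$ with $\bc^T \bx_t > \bc^T \by$ for every $\by \in X_\ep$, which is a valid \emph{feasibility cut}: the half-space $\{\by : \bc^T \by < \bc^T \bx_t\}$ contains all of $X_\ep$. If instead $\bx_t$ is $\ep$-feasible, I record it as a candidate and, using part (i), compute a subgradient $\mathbf{g}_t$ of the objective at $\bx_t$ in polynomial time; the half-space $\{\by : \mathbf{g}_t^T \by \le \mathbf{g}_t^T \bx_t\}$ then serves as an \emph{objective cut}, retaining every feasible point at least as good as $\bx_t$. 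In either case the chosen half-space meets $E_t$ in a half-ellipsoid still containing the target set, and I replace $E_t$ by the minimum-volume Löwner--John ellipsoid $E_{t+1}$ enclosing that half-ellipsoid. The standard volume-contraction estimate $\mathrm{vol}(E_{t+1}) \le e^{-1/(2n)}\,\mathrm{vol}(E_t)$ holds regardless of which cut was used.

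The core of the argument is the accuracy analysis. After $N$ iterations $\mathrm{vol}(E_N) \le e^{-N/(2n)}\mathrm{vol}(E_0)$, so taking $N = O\!\left(n^2 \ln \frac{R\,\Gamma}{\delta}\right)$ — with $\Gamma$ the polynomial growth bound on the objective and on $\Infeas$ over $E_0$ and $\delta$ the desired objective accuracy — forces the localizer to have volume below any fixed polynomially-small threshold. Here is where polynomial growth enters: since the objective and $\Infeas$ are at worst polynomially Lipschitz on $E_0$, a slightly inflated version of the near-optimal $\ep$-feasible set has volume bounded below by a quantity that is only polynomially small; once $\mathrm{vol}(E_N)$ drops beneath this bound, the best recorded $\ep$-feasible center is already within $\delta$ of the optimal value (and if no feasible center was ever found, the instance is certified infeasible). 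Because $R$, $\Gamma$, and $1/\delta$ are polynomial or enter only through logarithms, $N$ is polynomial in the instance size. To close, each iteration costs one oracle call (polynomial by hypothesis) plus an $O(n^2)$ ellipsoid update, with all numbers of polynomial bit-length since radii, subgradients, and separators are polynomially bounded; multiplying the polynomial iteration count by the polynomial per-iteration cost yields polynomial solvability.

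\textbf{The main obstacle} I anticipate is exactly this accuracy step: the feasible set may have empty interior (be lower-dimensional), so a naive volume lower bound on $X_\ep$ fails, and one must argue through an inflated sublevel set together with the polynomial-growth Lipschitz control in order to relate a small localizing volume to near-optimality of the recorded point. The secondary technical nuisance is keeping the bit-length of the ellipsoid's defining matrix under control across iterations; the standard remedy is to round the ellipsoid parameters at each step and absorb the rounding by slightly enlarging $E_{t+1}$, then verify that the accumulated error does not destroy the $e^{-1/(2n)}$ contraction factor.
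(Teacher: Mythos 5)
Proposition~\ref{prop:convopt} is not actually proved in the paper: it is imported verbatim from \cite[Theorem 5.3.1]{ConvOpt} and used purely as a black box, so the only meaningful comparison is against the proof in that reference, which is indeed the ellipsoid method. Your reconstruction --- separation oracle furnished by polynomial computability, initial ball furnished by polynomial boundedness, feasibility/objective cuts with volume contraction, and the accuracy step in which polynomial growth converts a small localizer volume into near-optimality of the best recorded $\ep$-feasible center (including the inflation trick for feasible sets with empty interior) --- is essentially that standard argument and is correct; the only minor blemish is the contraction constant, which for central cuts in dimension $n$ is $e^{-1/(2(n+1))}$ rather than $e^{-1/(2n)}$, a difference immaterial to the polynomial iteration bound.
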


In order to use \Pref{prop:convopt} we need to check that the optimization \eq{eq:optim_prob3} is a convex program. Since the objective function is linear, we only need to check that the feasible set, which we denote by $\mathcal{K}$, is convex.  

\begin{lem}
\label{lem:K_convex}
The feasible set $\mathcal{K}$ is a convex set.
\end{lem}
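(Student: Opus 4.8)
The plan is to exhibit $\mathcal{K}$ as a finite intersection of convex sets by arguing that each of the constraints \eq{prob3:c0}--\eq{prob3:c5} is affine in the decision variables $(R, T_{tot}, \{q_i\}_{i=1}^k)$. Since closed and open half-spaces and hyperplanes are convex, and convexity is preserved under (arbitrary, hence finite) intersection, this will immediately yield the claim.

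The one step requiring genuine justification -- and the place I would concentrate the argument -- is the cut constraint \eq{prob3:c0}. Here I would emphasize that, for a fixed cut $\Omega$, a fixed connected-component index $i$, and a fixed mode configuration $\mathbf{m}$, the mutual information $I\left( X_{\Omega \cap \cT(\Omega,i,\mathbf{m})} ; Y_{\Omega^c \cap \cR(\Omega,i,\mathbf{m})} | X_{(\Omega \cap \cT(\Omega,i,\mathbf{m}))^c} \right)$ is a fixed constant: the $X_i$ are prescribed to be independent $\N(0,1)$ random variables, and neither the cut $\Omega$ nor the mode configuration $\mathbf{m}$ is an optimization variable, so this quantity is determined entirely by the (fixed) channel gains. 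Consequently the right-hand side of \eq{prob3:c0} is a linear combination of the variables $q_{r(\Omega,i)}(\mathbf{m})$ with constant coefficients, and for each admissible $\Omega$ the whole constraint reads as an affine strict inequality in $(R,\{q_i\})$, defining an open half-space.

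The remaining constraints are visibly affine and I would dispatch them quickly: \eq{prob3:c1} are the nonnegativity inequalities $q_i(\mathbf{m}) \ge 0$ (closed half-spaces), \eq{prob3:c2} the normalization equalities $\sum_{\mathbf{m}} q_i(\mathbf{m}) = 1$ (hyperplanes), \eq{prob3:c3} equalities between two linear functionals of the marginals (hyperplanes), \eq{prob3:c4} the affine inequality $\sum_i \sum_{\bm:\, m_{t(i)}=1} q_{s(i)}(\bm) \le T_{tot}$, and \eq{prob3:c5} the affine inequality $C_{min} \le R$. The feasible set $\mathcal{K}$ is precisely the intersection of these convex sets and is therefore convex, which is exactly the hypothesis needed before invoking \Pref{prop:convopt}. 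No serious obstacle arises; the only subtlety worth flagging is the recognition that the mutual-information coefficients in \eq{prob3:c0} are constants rather than functions of the $q_i$, which is what keeps the cut constraint affine rather than merely convex.
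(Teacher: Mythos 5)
Your proposal is correct and matches the paper's own argument in substance: the paper likewise observes that $F(\Omega,q_1,\ldots,q_k)$ is linear in the $q_i$ (precisely because the mutual-information terms are fixed constants once the distribution of the $X_i$ and the channel gains are fixed), verifies convexity of each cut-constraint set $\sC(\Omega)$ via convex combinations, notes that the remaining constraints \eq{prob3:c1}--\eq{prob3:c5} are linear, and concludes by intersection. Your phrasing in terms of half-spaces and hyperplanes is just a more compact packaging of the same reasoning, with the key point---constancy of the mutual-information coefficients---correctly identified.
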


\begin{proof}(Lemma \ref{lem:K_convex})
Let $F(\Omega, q_1,\ldots,q_k)$ be defined as
\[
F(\Omega, q_1,\ldots,q_k) = \sum_{i=1}^{N(\Omega)} \sum_{\mathbf{m}\in \{0,1\}^{|V_{r(\Omega,i)}|  }} 
	q_{r(\Omega,i)}\left( \mathbf{m} \right) 
		\cdot 
	I\left( X_{\Omega \cap \cT(\Omega,i,\mathbf{m})} ; Y_{\Omega \cap \cR(\Omega,i,\mathbf{m})} | X_{(\Omega \cap \cT(\Omega,i,\mathbf{m}))^c} \right),
\]
and consider the set $\sC(\Omega) = \{(R, P, q_1, \ldots, q_k) : R \le F(\Omega, q_1, \ldots, q_k) \}$.
Choose vectors $(R_1, P_1, q_{1,1},\ldots, q_{k,1})$
and $(R_2, P_2, q_{1,2},\ldots, q_{k,2} )$ in $\sC(\Omega)$. We will show that $\sC(\Omega)$ is a convex set by showing that for any $\gamma \in [0,1]$
the vector $\gamma(R_1, P_1, q_{1,1},\ldots, q_{k,1}) + (1-\gamma)(R_2, P_2, q_{1,2},\ldots, q_{k,2})$ is also in $\sC(\Omega)$. Notice that $R_1 \le F(\Omega, q_{1,1},\ldots, q_{k,1})$ and $R_2 \le F(\Omega, q_{1,2},\ldots, q_{k,2})$ if and only if $(R_1, P_1, q_{1,1},\ldots, q_{k,1}) \in \sC(\Omega)$
and $(R_2, P_2, q_{1,2},\ldots, q_{k,2}) \in \sC(\Omega)$. Therefore
\begin{align*}
\gamma R_1 +  (1-\gamma) R_2 
& \le \gamma F(\Omega,q_{1,1},\ldots, q_{k,1}) + (1-\gamma) F(\Omega, q_{1,2},\ldots, q_{k,2})
\\
& \stackrel{(a)}{\le} F(\Omega, \gamma q_{1,1} + (1-\gamma)q_{1,2},\ldots, \gamma q_{k,1} + (1-\gamma)q_{k,2})
\end{align*}
where $(a)$ is due to the fact that $F(\Omega, q_1,\ldots,q_k)$ is a linear function of $(q_{1},\ldots,q_k)$.  Thus, $\sC(\Omega)$ is a convex set for any $\Omega \subseteq V$. 

In addition, consider the set $\sD$ of $(R, P, q_1,\ldots, q_k)$ satisfying \eq{prob3:c1} - \eq{prob3:c5}. Since \eq{prob3:c1} - \eq{prob3:c5} are linear on $(R, P, q_1,\ldots, q_k)$, $\sD$ is also a convex set, and as a result, $\cK= \cap_{\Omega \subseteq V \backslash\{D\} } \sC(\Omega) \cap \sD$ is a convex set.
\end{proof}

Having proved that \eq{eq:optim_prob3} is a convex program, in order to use \Pref{prop:convopt} we need to check that the conditions of Definitions \ref{def:1}, \ref{def:2}, and  \ref{def:3} are satisfied. Part (i) of Definition \ref{def:1} follows from the linearity of the objective in \eq{eq:optim_prob3}. For part (ii) of Definition \ref{def:1} we specify an infeasibility measure $\Infeas(\cdot):\R^{2+\sum_{i=1}^k 2^{|V_i|}} \to \R$ as follows\footnote{With a slight abuse of notation we use $\max$ on vector quantities by taking the maximum over the components of the vector.}:
\begin{align}
\label{eq:infeas}
\Infeas\big(R&, P, q_{1}(\overbrace{0,\ldots,0}^{|V_1| \, \text{many}} ), \ldots, q_{1}(\overbrace{1,\ldots,1}^{|V_1| \, \text{many}}), 
	\ldots, q_{k}(\overbrace{0,\ldots,0}^{|V_k| \, \text{many}}), \ldots, q_{k}(\overbrace{1,\ldots,1}^{|V_k| \, \text{many}})\big) \nonumber\\
=& \max\Bigg\{0, -q_{1}(0,\ldots,0), \ldots, -q_{1}(1,\ldots,1), \ldots, -q_{k}(0,\ldots,0), \ldots, -q_{k}(1,\ldots,1), \nonumber \\
& \sum_{\bm \in \{0,1\}^{|V_1|}}q_1(\bm)-1, \ldots, \sum_{\bm \in \{0,1\}^{|V_k|}}q_k(\bm)-1, \nonumber\\
& \bigg\{ \sum_{\mathbf{m}\in \{0,1\}^{|V_i|} :\, (m_j)_{j\in V_i \cap V_l}=\mathbf{m}_1} 
			q_{i}\left(\mathbf{m}\right) 
		-
		\sum_{\mathbf{m}\in \{0,1\}^{|V_l|} :\, (m_j)_{j\in V_i \cap V_l}=\mathbf{m}_1} 
			q_{l}\left(\mathbf{m}\right): \bm_1\in \{0,1\}^{|V_i\cap V_l|}, \nonumber\\
& i\ne l, i,l \in \{1,\ldots,k\}\bigg\}, \sum_{i=1}^{|V|} \sum_{\bm \in \{0,1\}^{|V_{s(i)}|}:m_{t(i)}=1} q_{s(i)}(\bm) - P, C_{min} - R, \nonumber\\
& R-\min_{\Omega \in V\backslash \{D\}} F(\Omega\cup \{S\}, q_1,\ldots,q_k)\Bigg\}
\end{align}

The conditions of part (ii) of Definition \ref{def:1} are verified in the following lemma.
\begin{lem}
\label{lem:separation}
For a given vector $(R, P, q_1,\ldots, q_k) \in \R^{2+\sum_{i=1}^k 2^{|V_i|}}$ and any $\ep > 0$ we can either  (a) determine in polynomial time on $|V|2^{\max_i|V_i|}$
if $\Infeas(R, P, q_1,\ldots, q_k)\le \ep$ and if not (b) 
find in polynomial time on $|V|2^{\max_i|V_i|}$ a vector $\bc \in \R^{2+\sum_{i=1}^k 2^{|V_i|}}$, such that for every $(R', P', q_1',\ldots, q_k')$ satisfying $\Infeas((R', P', q_1',\ldots, q_k'))\le \ep$, $\bc^T (R', P', q_1',\ldots, q_k') < \bc^T (R, P, q_1,\ldots, q_k)$.
\end{lem}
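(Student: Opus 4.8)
The plan is to exploit the fact that $\Infeas$, as defined in \eq{eq:infeas}, is a pointwise maximum of finitely many \emph{convex} functions of the variable vector $\bx = (R, P, q_1, \ldots, q_k)$, and is therefore itself convex. Deciding whether $\Infeas(\bx) \le \ep$ and, when it fails, producing the separating vector $\bc$ both reduce to a single principle: evaluate every term inside the maximum, and if some term $f$ exceeds $\ep$ at $\bx$, return a subgradient of that term. Indeed, if $\bc \in \partial f(\bx)$, then for every $\by$ with $\Infeas(\by) \le \ep$ we have $f(\by) \le \Infeas(\by) \le \ep < f(\bx)$, so the subgradient inequality gives
\[
\bc^T(\by - \bx) \le f(\by) - f(\bx) \le \ep - f(\bx) < 0,
\]
which is exactly the required strict separation $\bc^T \by < \bc^T \bx$. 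Hence it suffices to show that each term can be evaluated, and a subgradient produced, in time polynomial in $|V| 2^{\max_i |V_i|}$.

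All terms in \eq{eq:infeas} except the last are explicit \emph{affine} functions of $\bx$: the nonnegativity terms $-q_i(\bm)$, the normalization terms $\sum_\bm q_i(\bm) - 1$, the consistency differences of \eq{prob3:c3}, the duty-cycle term $\sum_i \sum_{\bm : m_{t(i)}=1} q_{s(i)}(\bm) - P$, and $C_{min} - R$. There are $O(|V|^2 2^{\max_i |V_i|})$ of them, each a sum of at most $O(2^{\max_i |V_i|})$ coordinates, so all are evaluated in the allotted time, and the subgradient of an affine term is simply its constant coefficient vector, read off directly.

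The only nontrivial term is $R - \min_{\Omega} F(\Omega \cup \{S\}, q_1, \ldots, q_k)$, where the minimum ranges over the exponentially many cuts $\Omega \subseteq V \setminus \{D\}$. First I would note that for a \emph{fixed} cut $\Omega$, evaluating $F(\Omega \cup \{S\}, q_1, \ldots, q_k)$ is polynomial: one computes the connected components $\{G_{\Omega, i}\}$ of the cut graph, uses P1 to assign each to a containing set $V_{r(\Omega,i)}$, and sums the Gaussian mutual informations (polynomially computable) weighted by the appropriate marginals of $q_{r(\Omega,i)}$. To minimize over $\Omega$, I would invoke \cite{PE11}: as shown there, for each fixed mode configuration the cut function $\Omega \mapsto I(X_{\Omega \cap \cT(\bm)}; Y_{\Omega^c \cap \cR(\bm)} \mid X_{(\Omega \cap \cT(\bm))^c})$ is submodular, and via the decomposition \eq{eq:equivalents_of_I_2} the map $\Omega \mapsto F(\Omega \cup \{S\}, q_1, \ldots, q_k)$ is a nonnegative combination of such functions, hence submodular in $\Omega$. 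Submodular function minimization over the cuts separating $S$ from $D$ then returns both the minimum value and a minimizing cut $\Omega^*$ in polynomial time. Since $F(\Omega^* \cup \{S\}, \cdot)$ is affine in the $q_i$ (as established in the proof of \Lref{lem:K_convex}), the vector carrying a $1$ in the $R$-coordinate and $-\,\partial F(\Omega^* \cup \{S\},\cdot)/\partial q_i(\bm)$ in the remaining coordinates is a subgradient of this term at $\bx$.

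Assembling these pieces: evaluate all terms, take the maximum to obtain $\Infeas(\bx)$, and compare with $\ep$; this settles part (a). If $\Infeas(\bx) > \ep$, select any term attaining a value above $\ep$ and output its subgradient as $\bc$, which yields part (b) by the separation argument above. The main obstacle is the hard term: reducing the minimization over exponentially many cuts to a polynomial-time computation. The crux is establishing that $\Omega \mapsto F(\Omega \cup \{S\}, q_1, \ldots, q_k)$ is genuinely submodular at the queried point---which requires nonnegativity of the relevant weights (controlled by the nonnegativity terms already present in $\Infeas$, so that the combination of the \cite{PE11} submodular functions stays nonnegative) together with the fact that, via P1 and the connected-component decomposition, $F$ coincides with the expected cut value of a single such weighting, so that submodularity is preserved and the SFM machinery applies.
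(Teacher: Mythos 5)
Your proposal is correct and follows essentially the same route as the paper's proof: evaluate each term of the $\max$ in \eqref{eq:infeas} (the affine ones directly, the cut term via submodular minimization of $F(\Omega,q_1,\ldots,q_k)$ over $\Omega$ using the submodularity result of \cite{PE11}), and when some term exceeds $\ep$, return its (sub)gradient as the separating vector. If anything, you are slightly more careful than the paper on two points---you explicitly derive the strict separation from the subgradient inequality, and you correctly treat the term $R-\min_\Omega F(\Omega\cup\{S\},\cdot)$ as a maximum of affine functions whose subgradient comes from the minimizing cut $\Omega^*$, rather than as a literally linear term with constant gradient---and you rightly flag that submodularity of $\Omega \mapsto F(\Omega,q_1,\ldots,q_k)$ needs the weights $q_i(\mathbf{m})$ to be nonnegative, a hypothesis the paper invokes implicitly.
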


\begin{proof}(Lemma \ref{lem:separation})
Part (a) requires checking that each of the arguments of the $\max$ of (\ref{eq:infeas}) is smaller than or equal to $\ep$ in polynomial time. All the terms except for the last one are linear functions of $(R, P, q_1,\ldots, q_k)$ and can be easily computed in time that is polynomial on $|V| 2^{\max_i |V_i|}$. The last term can be compared to $\ep$ by performing a minimization over $\Omega$ of $F(\Omega, q_1,\ldots, q_k)$. As shown in \cite{PE11}, with independent encoding at the nodes (i.e. $X_S, X_1,\ldots, X_n$ independent) and for any channel model where the output variables $Y_i$ are conditionally independent conditioned on $X_S,X_1,\ldots, X_n$, the function $I\left( X_{\Omega \cap \cT(\Omega,i,\mathbf{m})} ; Y_{\Omega \cap \cR(\Omega,i,\mathbf{m})} | X_{(\Omega \cap \cT(\Omega,i,\mathbf{m}))^c} \right)$ is a submodular function of $\Omega$ for any $i=1,\ldots, N(\Omega)$. Since for fixed $q_1,\ldots, q_k$, $F(\Omega, q_1,\ldots, q_k)$ is a linear combination of submodular functions of $\Omega$, it is also a submodular function of $\Omega$. In addition, assuming that $F(\Omega, q_1,\ldots, q_k)$ can be computed in polynomial time on $|V| 2^{\max_i |V_i|}$ (which holds in the Gaussian channel model case), the minimization over $\Omega$ of $F(\Omega, q_1,\ldots, q_k)$ can be computed in polynomial time on $|V| 2^{\max_i |V_i|}$ \cite{PE11}.

We now focus on condition (b). In this case $\Infeas(R, P, q_1,\ldots, q_k) > \ep$, meaning that at least one of the arguments of the $\max$ of (\ref{eq:infeas}) is larger than $\ep$. Let $f_i(R, P, q_1,\ldots, q_k)$ be the $i$th argument of the $\max$ in (\ref{eq:infeas}), and assume that $f_i(R, P, q_1,\ldots, q_k) > \ep$. We can always find a vector $\bc_i$ satisfying $\bc_i^T (R', P', q_1',\ldots, q_k') < \bc_i^T (R, P, q_1,\ldots, q_k)$ for all $(R', P', q_1',\ldots, q_k')$ for which $\Infeas(R', P', q_1',\ldots, q_k') \le \ep$ by setting $\bc_i = \nabla f_i(R, P, q_1,\ldots, q_k)$, where $\nabla$ is the gradient operator. This choice of $\bc_i$ is the normal of a hyperplane that separates $(R, P, q_1,\ldots, q_k)$ from the set $\{(R', P', q_1',\ldots, q_k'):\Infeas(R', P', q_1',\ldots, q_k') \le \ep\}$. In addition, since $f_i(R, P, q_1,\ldots, q_k)$ is a linear function of $(R, P, q_1,\ldots, q_k)$, its gradient is constant (i.e. independent of $(R, P, q_1,\ldots, q_k)$) and hence $\bc_i$ can be precomputed. The complexity of computing $\bc$ is that of determining which argument of the $\max$ is larger than $\ep$ and setting $\bc=\bc_i$. Since the $\max$ has order $|V| 2^{\max_i |V_i|}$ arguments, condition (b) is satisfied.

\end{proof}

\noindent
(Proof of \Tref{mainthm})
Having proved these preliminary results, we are ready to prove the main result of this section.
The optimization of Problem 3 can be solved in polynomial time on the size of the problem.
The proof uses \Pref{prop:convopt}, which requires verifying the convexity of the problem together with the conditions of polynomial computability, polynomial growth, and polynomial boundedness of the feasible set. Convexity was proved in Lemma \ref{lem:K_convex}, while polynomial computability was shown in \Lref{lem:separation}. Polynomial growth follows from the fact that $F(\Omega,q_1, \ldots, q_k)$ is a linear function of $q_1, \ldots, q_k$, while the objective and remaining terms that define the infeasibility measure are linear on $(R, P, q_1, \ldots, q_k)$. Finally, to check that if feasible set is polynomially bounded, we note that the feasible set is a subset of the hypercube
 \begin{align*}
 \{(R, P, & q_1(0,\ldots, 0), \ldots, q_1(1, \ldots, 1), \ldots, q_k(0,\ldots, 0), \ldots, q_k(1, \ldots, 1)):  \\
  & 0 \le (R,q_1(0,\ldots, 0), \ldots, q_1(1, \ldots, 1), \ldots, q_k(0,\ldots, 0), \ldots, q_k(1, \ldots, 1)) \le (R_{\max},  |V| , \bone_1, \ldots, \bone_k) \}
 \end{align*}
where $R_{\max}$ is the maximum rate that is achievable in the network when we use full-duplex communication. It follows that the feasible set is contained in the Euclidean ball centered at the origin with radius $\|(R_{\max},|V|, \bone_1, \ldots, \bone_k)\|_2$, which can be easily checked to grow polynomially on the size of the problem.



\subsection{Heuristic for Finding a Valid Node Grouping}
\label{heuristicsec}

In order to use Problem 3 to find half-duplex schedules, it is necessary to first find subsets of nodes $V_1,...,V_k$ satisfying P1 and P2.
Finding a good choice of the node grouping $V_1,..., V_k$ is not a simple task, and a brute force approach is computationally impractical.
As we will show in Section \ref{appsec}, for specific classes of networks it is in fact easy to find a good node grouping.
For the general case, we describe a heuristic that generates sets $V_1,...,V_k$ satisfying properties P1 and P2, but with no guarantee on the value of $\max_j |V_j|$.

We start by replacing P1 with a list of sufficient conditions.
For a node group $V_\ell$, we let $\{ i \in V_\ell \, : \, (i,j) \in E \text{ for some } j \in V_\ell\}$ be the set of the transmitter nodes in $V_\ell$ and $\{ j \in V_\ell \, : \, (i,j) \in E \text{ for some } i \in V_\ell\}$ be the set of receiver nodes in $V_\ell$.
Then, we notice that a node grouping $V_1,...,V_k$ satisfying
\begin{itemize}
\item every node $i \in V$ is a transmitter in at least one $V_\ell$,
\item if $i \in V_\ell$ is a transmitter node in $V_\ell$, then every node $j \in V$ for which $(i,j) \in E$ is also in $V_\ell$
\item if $j \in V_\ell$ is a receiver node in $V_\ell$, then every node $i \in V$ for which $(i,j) \in E$ is also in $V_\ell$
\end{itemize}
must also satisfy P1.
To see this, consider any cut $\Omega$, and any connected component $V_{\Omega,i}$ in its corresponding cut graph $G_{\Omega}$.
If $V_{\Omega,i} = \{v\}$, then clearly $V_{\Omega,i}$ is a subset of the of the $V_j$ which contains $v$ as a transmitter.
If $|V_{\Omega,i}| \geq 2$, then there must be nodes $u,v \in V_{\Omega,i}$ such that $u \in \Omega$, $v \in \Omega^c$ and $(u,v) \in E_{\Omega}$.
Let $V_j$ be a node group where $u$ is a transmitter.
Any node $w$ such that $(u,w) \in E$ is also in $V_j$.
In particular, any node $w \in V_{\Omega,i}$ such that $(u,w) \in E_{\Omega}$ must be in $V_j$ and it must in fact be a receiver node in $V_j$.
Thus, any node $z \in V_{\Omega,i}$ for which there exists a node $w \in V_{\Omega,i}$ such that $(z,w) \in E_{\Omega}$ and $(v,w) \in E_{\Omega}$ must also be in $V_j$.
Inductively, we see that any path in $V_{\Omega,i}$ starting at $v$ must be in $V_j$, which implies that $V_{\Omega,i} \subset V_j$.

This set of sufficient conditions for P1 suggests a simple rule for the node grouping.
One starts by picking an arbitrary node $v \in V$ and setting it as a transmitter in the first node group $V_1$.
Then, any node $u$ such that $(v,u) \in E$ is also included in $V_1$, and is thus a receiver in $V_1$.
Next, any node $w$ such that $(w,u) \in E$ for a receiver node $u$ already in $V_1$ is also included in $V_1$.
This process can be repeated, until no more nodes need to included in $V_1$.
We can then move to another node $z$ which is not a transmitter in $V_1$, and make it a transmitter in the second node group $V_2$.
Nodes are added to $V_2$ in the same way they were added to $V_1$, and by repeating this process until all nodes are transmitters in at least one node group $V_j$, we generate a node grouping $V_1,...,V_k$ satisfying P1.

This rule for choosing a node grouping satisfying P1, although quite simple, can be sufficient in some cases.
However, if our network graph $G$ is dense, this rule has the tendency to make the node groups very large, and can be quite suboptimal.
In Section \ref{linesec}, we consider an example of a network where this rule would generate a single group $V_1 = V$, but a more careful selection rule yields $V_1,...,V_k$ of much smaller size.

Next, we describe how, given sets $V_1,...,V_k$ satisfying property P1, one can find new sets $C_1,...,C_K$ (where $K \ne k$) satisfying P2.

%
%
%
For sets $V_1, V_2, \ldots, V_k \subseteq V$ satisfying P1, let $\tilde{G}=(V,\tilde{E})$ be an undirected graph satisfying $E \subseteq \tilde{E}$ and $(u,v) \in \tilde{E}$ if $u,v \in V_i$ for some $i$. That is, $\tilde{G}$ is obtained from $G$ by removing edge orientations, and by adding edges to make each $V_i$, $i=1,\ldots,k$, a clique. In addition, let $(C,T)$ be a {\em tree decomposition} of $\tilde{G}$, that is, a family $C=\{C_1,\ldots, C_K\}$ of subsets of $V$ and a tree $T$ whose nodes are the subsets $C_i$ satisfying the following properties \cite{Die05}:
\begin{enumerate}[(1)]
\item The union of all sets $C_i$ equals $V$. That is, each graph vertex is associated with at least one tree node.
\item For every edge $(v, w)$ in the graph $\tilde{G}$, there is a subset $C_i$ that contains both $v$ and $w$. That is, vertices are adjacent in the graph only when the corresponding subtrees have a node in common.
\item If $C_i$ and $C_j$ both contain a vertex $v$, then all nodes $C_k$ of the tree in the (unique) path between $C_i$ and $C_j$ contain $v$ as well. That is, the nodes associated with vertex $v$ form a connected subset of $T$. This is also known as coherence, or the running intersection property. It can be stated equivalently that if $C_i, C_j$ and $C_k$ are nodes, and $C_k$ is on the path from $C_i$ to $C_j$, then $C_i \cap C_j \subseteq C_k$.
\end{enumerate}
We are interested in a tree decomposition of $\tilde{G}$ because it can be used to define a joint distribution that factorizes over the sets $C_i$. 
Let $\mathcal{S}=\{C_i \cap C_j: (i,j) \text{ is an edge in }T\}$ be a multiset, i.e. $\mathcal{S}$ can have repeated elements. 
For a probability distribution $q_l$ on the elements of $C_l$, we let $q_{l:S\subset C_l}$ be the marginal associated with the subset of elements $S \subset C_l$.
Then we can define a joint distribution with alphabet $\mathcal{M}^{|V|}$ for arbitrary $\mathcal{M}$ by
\begin{equation}
q(m_1,\ldots, m_n) = \frac{\prod_{i=1}^K q_i(m_j:j\in C_i)}{\prod_{S\in \mathcal{S}} q_{l:S\subseteq C_l}(m_j: j \in S)},
\label{eq:joint_pmf}
\end{equation}
for $\{q_i(\cdot)\}_{i=1}^K$ satisfying
\begin{align*}
&  0 \le q_i\left(\mathbf{m}\right), \mathbf{m}\in \mathcal{M}^{|C_i|},  
		\ \text{ for } i=1,2 \ldots, K \\
&   \sum_{\mathbf{m}\in \mathcal{M}^{|C_i|}} 
		q_i\left(\mathbf{m}\right) = 1,
			\ \text{ for } i=1,2, \ldots, K \\
&   \sum_{\mathbf{m}\in \cM^{|C_i|} :\, (m_j)_{j\in C_i \cap C_l}=\mathbf{m}_1} 
			q_{i}\left(\mathbf{m}\right) 
		=
		\sum_{\mathbf{m}\in \mathcal{M}^{|C_l|} :\, (m_j)_{j\in C_i \cap C_l}=\mathbf{m}_1} 
			q_{l}\left(\mathbf{m}\right)  \\
& \text{ for all } \mathbf{m}_1 \in \mathcal{M}^{|C_i \cap C_l|}, \text{ for } i\ne l \text{ and } i, l \in \{1,2, \ldots, K\}.
\end{align*}
Therefore, given sets $V_1,...,V_k$ that satisfy property P1, the sets $C_1,...,C_K$ corresponding to the nodes of any tree decomposition of $\tilde G$ satisfy properties P1 (since each $V_i$ is a subset of at least one $C_j$) and P2.
Tree decompositions of the graph $\tilde G$ can be found in many ways, and we refer to \cite{tree} for a more detailed treatment of the subject.
In particular, the method known as node elimination guarantees that the total number of nodes in the tree decomposition, $K$, is at most $|V|$.
Thus, we have shown the following:

\begin{thm}
Consider a wireless relay network defined by a graph $G = (V,E)$ and suppose we have node subsets $V_1,...,V_k$ satisfying P1, and the associated undirected graph $\tilde G$.
Then, if $(C,T)$ is a tree decomposition of $\tilde G$ with nodes $\{C_1,...,C_K\}$ obtained through node elimination, $\ol C_{\iid}$ can be computed with a time complexity that is polynomial in $|V| 2^{\max_i |C_i|}$.
\end{thm}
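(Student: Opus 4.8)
The plan is to derive this statement as a direct consequence of \Tref{mainthm}, applied not to the original grouping $V_1,\dots,V_k$ but to the bags $C_1,\dots,C_K$ of the tree decomposition. The two things I must check are that $C_1,\dots,C_K$ satisfy P1 and P2, and that their number $K$ obeys the requirement $K \le |V|$ needed to invoke \Tref{mainthm}. Property P2 is essentially immediate: the junction-tree factorization \eq{eq:joint_pmf} exhibits, for any family of consistent marginals $\{q_i\}$ on the bags, an explicit joint distribution having them as marginals, which is exactly what P2 asserts.

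For P1, the crucial observation is that each original set $V_i$ is contained in a single bag $C_{j}$. By construction $\tilde G$ was augmented so that every $V_i$ is a clique; for each vertex $v$, the set of bags $\{ j : v \in C_j\}$ forms a subtree of $T$ by the running intersection property (3), and any two such subtrees associated with vertices $u,v \in V_i$ share a bag, since $(u,v)$ is an edge of $\tilde G$ and property (2) places both endpoints in a common bag. By the Helly property for subtrees of a tree, these pairwise-intersecting subtrees have a common node, i.e.\ a bag $C_j \supseteq V_i$. Once this containment holds, P1 is inherited: given any cut $\Omega$, each connected component $V_{\Omega,i}$ lies in some $V_\ell$ (because the $V_i$ satisfy P1), and that $V_\ell$ in turn lies in some $C_j$, so $V_{\Omega,i} \subseteq C_j$.

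With P1 and P2 established for $C_1,\dots,C_K$, I would invoke \Tref{mainthm} with this grouping. It guarantees that the associated Problem 3 shares its optimum with Problem 2 and is solvable in time polynomial in $|V|\, 2^{\max_i |C_i|}$. The hypothesis $K \le |V|$ is supplied by node elimination, which by construction produces at most $|V|$ bags; this is precisely the point at which that particular decomposition method is needed. Finally, specializing $\mu_1=-1$, $\mu_2=0$, $C_{min}=0$ reduces Problem 2 to Problem 1, whose optimal value is $\ol C_{\iid}$, so solving the corresponding Problem 3 computes $\ol C_{\iid}$ within the claimed complexity.

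The main obstacle I anticipate is the containment $V_i \subseteq C_j$, i.e.\ making rigorous the Helly/running-intersection argument that a clique of $\tilde G$ must lie inside a single bag; every remaining step is bookkeeping that reduces the claim to \Tref{mainthm}. I would also flag that this containment forces $\max_i |C_i| \ge \max_i |V_i|$, so the guaranteed feasibility of P2 is bought at the cost of expressing the complexity in terms of the (possibly larger) bag sizes.
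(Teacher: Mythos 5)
Your proposal is correct and takes essentially the same route as the paper: establish P2 for the bags $C_1,\dots,C_K$ via the junction-tree factorization \eq{eq:joint_pmf}, establish P1 by noting each $V_i$ (a clique of $\tilde G$) lies inside a single bag, use node elimination to guarantee $K \le |V|$, and then invoke \Tref{mainthm} (specialized, per the paper's remark, to $\mu_1=-1$, $\mu_2=0$, $C_{min}=0$ so that the optimum is $\ol C_{\iid}$). The only difference is that your Helly-property argument for subtrees rigorously proves the containment $V_i \subseteq C_j$, a step the paper merely asserts in passing.
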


In particular, we point out that if we can find a tree decomposition for which $\max_i|C_i|$ is at most logarithmic in $|V|$, Problem 3 yields a polynomial-time algorithm for generating an approximately optimal half-duplex schedule.
The width of a tree decomposition is defined as the size of its largest set $C_i$ minus one and the {\em treewidth} $\text{tw}(\tilde G)$ of graph $\tilde G$ is the minimum width among all possible tree decompositions of $\tilde G$.
Therefore, the best time complexity achieved by this approach is polynomial $|V| 2^{\text{tw}(\tilde G)}$.
However, finding the tree decomposition whose width is $\text{tw}(\tilde G)$ is known to be an NP-complete problem \cite{tree}.

\section{Applications of the Main Results}
\label{appsec}

In this section we first apply the result in Theorem \ref{mainthm} to two special classes of wireless Gaussian networks.
In both cases, Theorem \ref{mainthm} implies that Problem 3 can be computed with a time complexity that is polynomial in the network size.
Then, we describe how the main results can be slightly modified in order to address different network models.

\subsection{Layered Networks} \label{layeredsec}

Consider a layered network consisting of a set of nodes $V$. Set $n = |V|$. 
We assume the network has $L$ layers
and we denote the nodes in layer $\ell$ by $\cV_\ell$ for $\ell = 1,2,\ldots, L$. In the first layer
there is only the source node $\cV_1 = \{S\}$ and in last layer there is only the destination node $\cV_L = \{D\}$. 

\begin{thm}
\label{thm:layered}
For a half-duplex layered wireless relay network, Problem 3 can be used to find a constant-gap optimal transmit/receive schedule in polynomial time, provided that the $|\cV_i|$s grow at most logarithmically with $|V|$. 
\end{thm}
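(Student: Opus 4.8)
The plan is to invoke Theorem \ref{mainthm} directly, so the entire task reduces to exhibiting a node grouping $V_1,\ldots,V_k$ satisfying P1 and P2 with $\max_i |V_i|$ logarithmic in $|V|$. Once such a grouping is in hand, Theorem \ref{mainthm} guarantees that Problem 3 has $O(|V| 2^{\max_i |V_i|}) = O(|V| 2^{O(\log |V|)}) = \text{poly}(|V|)$ variables and can be solved in time polynomial in that quantity, hence polynomial in $|V|$. So the real content is a combinatorial construction exploiting the layered structure.

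First I would observe that in a layered network, every edge goes from a node in $\cV_\ell$ to a node in $\cV_{\ell+1}$ for some $\ell$. This means that for any cut $\Omega$, the cut graph $G_\Omega$ has edges only between consecutive layers. My proposed grouping is to take $V_\ell = \cV_\ell \cup \cV_{\ell+1}$ for $\ell = 1,\ldots,L-1$, i.e. each group consists of one pair of adjacent layers. I would then verify P1 as follows: since any edge in $E_\Omega$ connects layer $\ell$ to layer $\ell+1$, any connected component $V_{\Omega,i}$ of the cut graph that contains an actual edge lies entirely within a single adjacent pair $\cV_\ell \cup \cV_{\ell+1} = V_\ell$ (a connected bipartite-between-two-layers subgraph cannot reach a third layer, because there are no edges skipping a layer); isolated vertices are handled as in the heuristic of Section \ref{heuristicsec}. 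Thus every $V_{\Omega,i}$ is contained in some $V_j$, giving P1. The size bound is immediate: $|V_\ell| = |\cV_\ell| + |\cV_{\ell+1}|$, which is logarithmic in $|V|$ by the hypothesis that each layer has at most logarithmically many nodes.

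For P2 I would appeal to the consistency/tree-decomposition machinery already developed in Section \ref{heuristicsec}. The groups $V_1,\ldots,V_{L-1}$ overlap in a simple chain pattern: $V_\ell$ and $V_{\ell+1}$ share exactly the layer $\cV_{\ell+1}$, and non-adjacent groups are disjoint. This chain structure means the associated graph $\tilde G$ already admits a path-shaped tree decomposition whose bags are exactly the $V_\ell$, so the running intersection property holds automatically and the factorized joint distribution of \eqref{eq:joint_pmf} is well-defined; equivalently, consistent marginals on a chain of overlapping sets always extend to a joint distribution. Hence P2 is satisfied with the $V_\ell$ themselves (no further merging via tree decomposition is needed), and $\max_\ell |V_\ell|$ stays logarithmic.

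The step I expect to require the most care is the P1 verification for connected components, specifically arguing cleanly that no connected component of a cut graph can span three or more layers. The layered assumption makes this true, but I would want to state precisely that edges exist only between consecutive layers and that a component, being connected through such edges, is confined to two consecutive layers — taking care of the degenerate cases of singleton components and components straddling the source/destination layers. Everything else is a direct application of Theorem \ref{mainthm} together with the observation that a chain of pairwise-overlapping sets trivially satisfies P2, so the polynomial-time conclusion follows once the grouping and its size bound are established.
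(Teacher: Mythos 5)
Your proposal is correct and follows essentially the same route as the paper: the same grouping $V_\ell = \cV_\ell \cup \cV_{\ell+1}$, the same P1 argument that cut-graph components are confined to consecutive layers, and the same path-shaped tree decomposition with bags equal to the $V_\ell$ for P2, followed by an application of Theorem \ref{mainthm}. No gaps; your extra care with singleton components is a minor refinement of the same argument.
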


\begin{proof}
Following the discussion in Section \ref{heuristicsec}, we will first find sets $V_1,V_2,..., V_k$ satisfying P1, and then we will take the nodes of a tree decomposition of $\tilde G$, to make sure both P1 and P2 are satisfied.

We let $k = L-1$, and we set $V_i = \cV_i \cup \cV_{i+1}$, for $i=1,...,L-1$.
For an arbitrary cut $\Omega$, it is easy to see that the graph $G_{\Omega}$ has connected components 
$G_{\Omega,i} = (V_{\Omega,i},E_{\Omega,i})$, where $V_{\Omega,i} = (\cV_i \cap \Omega) \cup  (\cV_{i+1} \cap \Omega^c)$ and $E_{\Omega,i} = \{(i,j) \in E_{\Omega} : i,j \in V_{\Omega,i} \}$, for $i=1,...,N(\Omega) = L-1$.
Hence, for any $i=1,...,L-1$, $V_{\Omega,i} \in V_i$.

The tree decomposition of $\tilde G$ will be given by $(C,T)$, where $C = (C_1,...,C_{L-1})$, $C_i = V_i$ for $i=1,...,L-1$, and $T = \{ (V_1,V_2), (V_2,V_3),...,(V_{L-2},V_{L-1}) \}$.
It is clear that properties (1) and (2) of a tree decomposition are satisfied.
To verify (3), we notice that $C_i \cap C_j \ne \emptyset$ for $i < j$ if and only if $j = i+1$.
Therefore it is clear that all nodes in $C_i \cap C_{i+1}$ are part of any $C_k$ in the path from $C_i$ to $C_{i+1}$ (which is given by just $C_i$ and $C_{i+1}$).
Thus, by Theorem \ref{mainthm}, 
Problem 3 can be used to find an approximately optimal transmit/receive schedule in time that is polynomial in $|V| 2^{\max_i |\cV_i|}$.
\end{proof}

\subsection{Line Network with One and Two-Hop Links} \label{linesec}

In this section we consider applying Theorem \ref{mainthm} to a class of non-layered networks, which can be seen as a generalization of line networks.
A line network with one and two-hop links is given by a set of nodes $V = \{1,2,...,n\}$ and edge set $\{(1,2),(2,3),...,(n-1,n)\} \cup \{(1,3),(2,4),...,(n-2,n)\}$, and the source and destination are given by $1$ and $n$ respectively.
If we define $V_i = \{i,i+1,i+2,i+3\}$ for $i=1,2,...,n-3$, we have the following result.

\begin{thm} \label{thm:half-duplex-line}
For a half-duplex line network with one and two-hop links, Problem 2 can be used to find a constant-gap optimal transmit/receive schedule in polynomial time in $|V|$.
\end{thm}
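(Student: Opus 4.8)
The plan is to invoke \Tref{mainthm} with the node grouping $V_i = \{i, i+1, i+2, i+3\}$, $i = 1, \ldots, n-3$. Since every $|V_i| = 4$, we have $\max_i |V_i| = 4$, so the complexity bound $|V|\,2^{\max_i |V_i|} = 16|V|$ guaranteed by \Tref{mainthm} is linear in $|V|$; hence Problem 3, and equivalently Problem 2, can be solved in time polynomial in $|V|$. Everything therefore reduces to checking that this grouping satisfies properties P1 and P2.

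For P1, I would establish the combinatorial statement that, for every cut $\Omega$ with $1 \in \Omega$ and $n \notin \Omega$, each connected component $V_{\Omega,i}$ of the cut graph $G_\Omega$ spans at most four consecutive indices, i.e. $\max V_{\Omega,i} - \min V_{\Omega,i} \le 3$. Granting this, a component with smallest index $a$ satisfies $V_{\Omega,i} \subseteq \{a,a+1,a+2,a+3\}$, which is exactly $V_a$ when $a \le n-3$ and is contained in $V_{n-3}$ otherwise, yielding P1. To prove the span bound, I would first observe that every edge of $G$ has length $1$ or $2$, so any edge of $G_\Omega$ crossing the gap between positions $p$ and $p+1$ must be one of $(p,p+1)$, $(p,p+2)$, or $(p-1,p+1)$; if none of these lies in $E_\Omega$, call the boundary at $p$ \emph{closed}, and it separates components. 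Writing $s_i = 1$ iff $i \in \Omega$, a short case analysis shows that an open boundary at $p$ forces $s_{p+1}=0$ or $s_{p+2}=0$, and simultaneously $s_{p-1}=1$ or $s_p=1$. Combining these local constraints across four consecutive positions rules out four consecutive open boundaries, which caps every component at four nodes.

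For P2, I would exhibit an explicit tree decomposition of the auxiliary graph $\tilde G$ (in which each $V_i$ is made into a clique), mirroring the construction used for \Tref{thm:layered}. Take $T$ to be the path $V_1 - V_2 - \cdots - V_{n-3}$ with $C_i = V_i$. Properties (1) and (2) of a tree decomposition are immediate, since $\bigcup_i V_i = \{1,\ldots,n\}$ and every edge of $\tilde G$ lies inside some clique $V_i$ by construction. For the running intersection property (3), note that a node $v$ belongs to exactly the bags $V_\ell$ with $\ell \in \{v-3,\ldots,v\} \cap [1,n-3]$, a contiguous range of indices, which therefore induces a connected subpath of $T$. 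Thus the $V_i$ satisfy P2 with all bags of size $4$, and \Tref{mainthm} applies to give the claimed running time.

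The main obstacle is the span bound underlying P1. Because of the two-hop links, a component can skip over intermediate nodes, so the argument must genuinely preclude long ``chains'' assembled from alternating one- and two-hop crossing edges; the delicate point is verifying that no assignment of $\Omega/\Omega^c$ labels produces four consecutive open boundaries (equivalently, a component on five or more nodes). By contrast, the tree-decomposition verification of P2 and the complexity accounting are routine once the grouping $V_i = \{i,i+1,i+2,i+3\}$ is fixed.
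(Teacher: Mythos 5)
Your overall skeleton coincides with the paper's: the same grouping $V_i=\{i,i+1,i+2,i+3\}$, the same path-shaped tree decomposition with bags $C_i=V_i$ (whose running-intersection property you verify exactly as the paper does), and the same appeal to \Tref{mainthm} to get complexity polynomial in $|V|\,2^{4}=O(|V|)$. Where you depart from the paper is the proof of P1: the paper embeds $G_\Omega$, for every cut $\Omega$, into one fixed auxiliary bipartite graph $G'$ (a transmit copy $j$ and a receive copy $j'$ of every node, with edges from $i$ to $(i+1)'$ and $(i+2)'$), and argues that any undirected path of $G'$ on more than four vertices must contain both copies $j$ and $j'$ of some node, which cannot happen inside $G_\Omega$ since $j$ lies in exactly one of $\Omega$, $\Omega^c$. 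Your boundary argument is a legitimate alternative route to the same span bound, but as written it has a gap at precisely the step you flag as delicate.

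The gap is this: your two necessary conditions for the boundary at $p$ to be open --- ($s_{p+1}=0$ or $s_{p+2}=0$) and ($s_{p-1}=1$ or $s_p=1$) --- are correct but strictly weaker than openness, because they discard the correlation between which node supplies the ``$1$'' and which supplies the ``$0$''. Combining only these weakened constraints over four consecutive positions does \emph{not} produce a contradiction: the assignment $(s_{p-1},s_p,s_{p+1},s_{p+2},s_{p+3},s_{p+4},s_{p+5})=(1,1,1,0,1,0,0)$ satisfies all eight of your constraints for the boundaries $p,p+1,p+2,p+3$ simultaneously. (It is not a counterexample to the theorem --- the boundary at $p+2$ is genuinely closed under this assignment --- but it shows your inference is invalid as stated.) The fix is to use the exact characterization of openness: the boundary at $p$ is open iff [$s_{p+1}=0$ and ($s_{p-1}=1$ or $s_p=1$)] or [$s_p=1$ and $s_{p+2}=0$], and then split on $s_{p+2}$. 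If $s_{p+2}=1$, openness at $p$ forces $s_{p+1}=0$, after which both disjuncts for openness at $p+1$ fail; if $s_{p+2}=0$, openness at $p+2$ forces $s_{p+1}=1$ and $s_{p+3}=0$, after which both disjuncts for openness at $p+3$ fail. Hence among any four consecutive boundaries at least one is closed, which restores your span bound; with that repaired, the remainder of your proof (the P2 verification and the complexity count) is sound and matches the paper.
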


\begin{proof}
Once again we follow the discussion in Section \ref{heuristicsec} to first find sets $V_1,V_2,..., V_k$ satisfying P1 and then take the nodes of a tree decomposition of $\tilde G$ to make sure both P1 and P2 are satisfied.

For an arbitrary cut $\Omega$, consider the graph $G_{\Omega}$.
We claim that any connected component of this graph is contained in $V_i$, for some $i \in  \{1,...,n-3\}$.
To see this, we notice that $G_\Omega$ can be seen as a subgraph of the graph $G'$ shown in Figure \ref{bipartitefig},
	\begin{figure}[ht] \center
		\includegraphics[height=4cm]{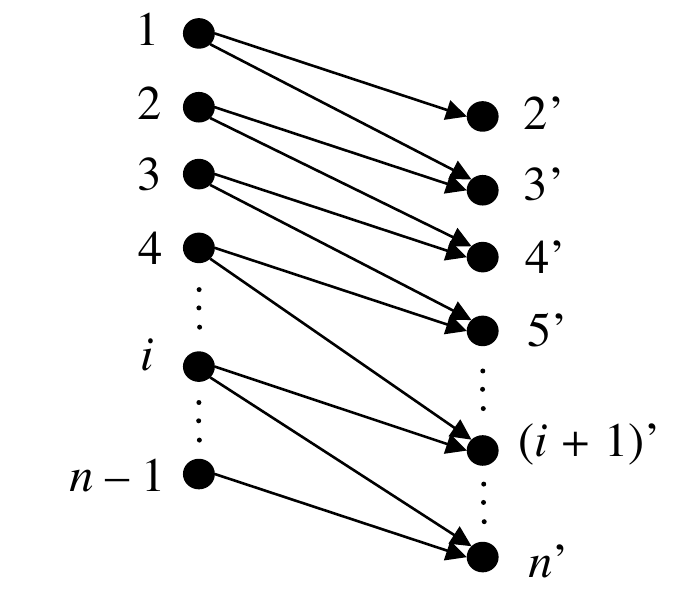}
	\caption{Graph $G'$ from which $G_\Omega$ can be obtained for any $\Omega$}
	\label{bipartitefig}
	\end{figure}
by keeping only the nodes on the left corresponding to the nodes in $\Omega$ and nodes on the right corresponding to $\Omega^c$.
It is then clear that any connected component in $G_\Omega$ must correspond to the nodes on an undirected path in $G'$.
However, it is clear that any undirected path in $G'$ containing more than $4$ nodes must contain nodes $j$ and $j'$, for some $j \in \{1,...,n\}$.
We conclude that any connected component of $G_\Omega$ must be contained in a subset of nodes $\{i,i+1,i+2,i+3\}$ for some $i \in \{1,...,n-3\}$.

The tree decomposition of $\tilde G$ will be given by $(C,T)$, where $C = (C_1,...,C_{n-3})$, $C_i = V_i$ for $i=1,...,n-3$, and $T = \{ (V_1,V_2), (V_2,V_3),...,(V_{n-4},V_{n-3}) \}$.
It is clear that properties (1) and (2) of a tree decomposition are satisfied.
To verify (3), we simply notice that each node $i$ belongs only to the subsets $V_{i-3},V_{i-2},V_{i-1},V_i$ (or just those whose index is greater than zero), and in $T$ we have the path $(V_{i-3},V_{i-2}),(V_{i-2},V_{i-1}), (V_{i-1},V_{i})$.
Thus, the nodes in $C$ associated with $i$ form a connected subset of $T$.
By Theorem \ref{mainthm}, 
Problem 3 can be used to find an approximately optimal transmit/receive schedule in time that is polynomial in $|V| 2^{\max_i |C_i|} = O(|V|)$.
\end{proof}

\subsection{Different Network Models}
\label{extsec}

In this section, we discuss how our main results can be applied to other network models.
In essence, \Tref{mainthm} can be extended to any network model where the cut capacities can be computed in polynomial time on $|V|$ and the cut set value $I(X_\Omega;Y_{\Omega^c} | X_{\Omega^c})$ for a product distribution of the transmit signals is a submodular function of $\Omega$. 
In particular, in networks for which the capacity is in fact known to be the cut-set value evaluated with independent transmit signals at each node, our results can be used to efficiently compute the half-duplex capacity\footnote{In channel models where the received signals are affected by noise, our results require that the noise processes are independent across nodes}.
For example, this is the case of the linear deterministic channel model \cite{ADT11}, which we discuss below. 

Consider the general linear deterministic network model from \cite{ADT11}, 
where each node $u$ transmits a vector $\bX_u[t] \in \F^k$ at time $t$, for some $k \in \N$, and each node $v$ receives
\al{
\bY_v[t] = \sum_{u : (u,v) \in E} \mathbf{G}_{u,v} \bX_u[t],
\label{eq:detIO}
}
where all operations are performed over some prime field $\F$.
The ADT networks \cite{ADT11} correspond to the special case where each link $(u,v)$ has an associated non-negative integer channel gain $n_{u,v}$, and the channel matrices are defined as $\mathbf{G}_{u,v} = \mathbf{S}^{k-n_{uv}}$, where $k=\max_{u,v} n_{uv}$ and the shifting matrix $\mathbf{S}$ is a $k \times k$ matrix given by
\[
\mathbf{S}=\left(
\begin{array}{ccccc}
0 & 0 & 0 & \cdots & 0\\
1 & 0 & 0 & \cdots & 0\\
0 & 1 & 0 & \cdots & 0\\
\vdots & \ddots & \ddots & \ddots & \vdots\\
0 & \cdots & 0 & 1 & 0\\
\end{array}
\right).
\]
In the full-duplex scenario, for a given cut $\Omega$ of the network (where $S \in \Omega$ and $D \notin \Omega$), we can stack together the input vectors $\bX_u[t], u\in \Omega$, and output vectors $\bY_v[t], v \in \Omega^c$, and define a transition matrix $\Lambda_\Omega$ that gives the input-output relationship of these vectors according to (\ref{eq:detIO}). 
The cut-set outer bound for these networks, given by
\aln{
\max_{p(\{\bX_v\}_{v\in V})} \min_{\Omega} I(\bX_{\Omega} ; \bY_{\Omega^c} | \bX_{\Omega^c}) = 
\max_{p(\{\bX_v\}_{v\in V})} \min_{\Omega} H(\bY_{\Omega^c} | \bX_{\Omega} )
}
is maximized by choosing the $\bX_v$s to be independent i.i.d.~vectors with entries chosen uniformly at random from $\F$, in which case  $H(\bY_{\Omega^c} | \bX_{\Omega} )=\rank \left(\Lambda_\Omega\right)$.
Since this outer bound can in fact be achieved, by Theorem 4.1 in \cite{ADT11}, the capacity of (full-duplex) linear deterministic networks is given by
\al{  \label{lindetcap}
\min_{\Omega} \rank \left(\Lambda_\Omega\right).
}

This linear deterministic model can be adapted to a half-duplex setting in a very straightforward manner.
As defined in Section \ref{settingsec}, for each time $t=1,2,...,n$, we partition the node set $V$ into a set of transmitter nodes $Tx[t]$ and a set of receiver nodes $Rx[t]$.
At time $t$, each node $u \in Tx[t]$ transmits a signal $\bX_u[t] \in \F^q$, and a node $u \in Rx[t]$, must transmit $\bX_u[t] = \bv 0$.
Then, at time $t = 1,2,...,n$, the signal received by a node $v \in Rx[t]$ is given by \eq{eq:detIO}.
If, instead, node $v$ is in $Tx[t]$, its received signal is just $\bY_v[t] = \bv 0$.

%
%
%
%
%
%

In the half-duplex scenario, we are interested in the mutual information between nodes in transmitter mode and nodes in the receiver mode.
To this end, it is not difficult to see that, in general, for a set of nodes in the transmitter mode $A$ and a set of nodes in the receiver mode $B$, if the transmit signals $\bX_u$ of all nodes $u \in V$ are independently chosen uniformly at random from $\F^q$, we have
\aln{
I(\bX_A ; \bY_{B} | \bX_{A^c}) = \rank \left(\Lambda_{A,B}\right),
}
where $\Lambda_{A,B}$ is the transfer matrix between the vector of transmit signals $\bX_A$ (obtained by stacking the transmit vectors $\bX_u$ of all $u \in A$) and the vector of received signals $\bY_B$ (similarly obtained by stacking the received vectors $\bY_u$ of all $v \in B$).
The capacity of general linear deterministic networks in \eq{lindetcap} can then be used to obtain the capacity of half-duplex general linear deterministic networks, where we employ the terminology on mode configurations from Section \ref{settingsec}.
The proof of the following result follows the ideas from the proof of Theorem 8.3 in \cite{ADT11}, and here we provide a proof sketch.

\begin{thm} \label{hdlindetthm}
The capacity of a half-duplex linear deterministic network is given by
\al{ \label{hdlindetcap}
\max_{q} \min_{\Omega} \sum_{\bv m} q(\bv m) {{ \emph{ \rank }}} \left(\Lambda_{\Omega \cap \cT(\bv m),{\Omega^c \cap \cR(\bv m)}}\right).
}
\end{thm}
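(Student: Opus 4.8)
The plan is to prove \eqref{hdlindetcap} by exhibiting a matching converse and achievability, following the template of the full-duplex result \eqref{lindetcap} and its half-duplex adaptation in \cite{ADT11}. Throughout I treat the schedule as fixed and signal-independent, as stipulated by the model, which is what lets me summarize a blocklength-$n$ code by the empirical fractions $q(\bv m)$ of time-steps spent in each mode configuration.

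For the converse I would run the standard cut-set argument on a fixed-schedule code. For any cut $\Omega$ with $S \in \Omega$, $D \notin \Omega$, Fano's inequality gives $nR \le I(\bX_\Omega^n; \bY_{\Omega^c}^n \mid \bX_{\Omega^c}^n) + n\epsilon_n$. Expanding this per time-step and using that in a deterministic network each received vector is a deterministic function of the transmitted vectors, the single-letter term collapses to $H(\bY_{\Omega^c}[t] \mid \bX_{\Omega^c}[t])$. Since at time $t$ only nodes in $\cT(\bv m(t))$ transmit and only nodes in $\cR(\bv m(t))$ receive, this entropy is at most $\rank\left(\Lambda_{\Omega \cap \cT(\bv m(t)),\,\Omega^c \cap \cR(\bv m(t))}\right)$ in units of $\log|\F|$, a bound that holds regardless of any temporal correlation in the inputs. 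Grouping the time-steps by their active mode and identifying $q(\bv m)$ with the empirical schedule yields $R \le \min_\Omega \sum_{\bv m} q(\bv m)\,\rank\!\left(\Lambda_{\Omega \cap \cT(\bv m),\,\Omega^c \cap \cR(\bv m)}\right) + \epsilon_n$. As the code's empirical schedule is one feasible choice of $q$, letting $n \to \infty$ gives the outer bound $R \le \max_q \min_\Omega \sum_{\bv m} q(\bv m)\,\rank(\cdot)$.

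For achievability I would fix a schedule $q$ attaining the maximum, approximate it by a rational vector so that modes can be assigned to an integer number of slots in a period $T$, and unfold the half-duplex network in time into a layered full-duplex linear deterministic network: each physical node $v$ is replicated into copies $v[1],\dots,v[T]$, the transfer matrices between consecutive layers are those induced by the mode active in that slot, and each node is joined to its next-layer copy by a high-rank identity block modeling memory and relaying. Because this unfolded network is an ordinary full-duplex linear deterministic network, \eqref{lindetcap} tells us its capacity equals its min-cut rank and is achieved by random linear coding over $T$-blocks. It then remains to evaluate that min-cut: a cut selects at each slot a partition $\Omega_t$ contributing $\rank\!\left(\Lambda_{\Omega_t \cap \cT(\bv m(t)),\,\Omega_t^c \cap \cR(\bv m(t))}\right)$, the memory edges force all copies of a node onto the same side of any finite-value cut, so the optimizing partition is time-invariant, $\Omega_t \equiv \Omega$. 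Normalizing by $T$ then collapses the sum to $\sum_{\bv m} q(\bv m)\,\rank(\cdot)$, and minimizing over $\Omega$ matches the converse; a continuity argument as the rational schedule tends to $q$ and $T \to \infty$ closes the gap.

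I expect the principal obstacle to be making rigorous the claim that the unfolded network's min-cut is attained by a \emph{time-invariant} cut. This requires choosing the capacity of the memory (self) edges large enough that severing them is never optimal, yet arguing that they do not inflate the within-slot rank contributions, and then verifying that under these edges the optimal partition genuinely keeps every node's copies together. This is precisely the point at which half-duplex scheduling interacts with the layered-network reduction, and it is the technical heart that the reference to the proof of Theorem 8.3 in \cite{ADT11} is meant to supply; the remaining ingredients, namely the cut-set converse and the full-duplex achievability of \eqref{lindetcap}, are standard once the reduction is in place.
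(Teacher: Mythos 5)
Your converse is fine---it is the standard fixed-schedule cut-set argument and gives the outer bound correctly. The genuine gap is in your achievability step, at exactly the point you flag as the technical heart, and your proposed resolution rests on a false claim. In a time-unfolded network the memory edges are necessarily directed forward in time, $v[t] \to v[t+1]$; giving them infinite (or very large) capacity only forbids assignments with $v[t] \in \Omega$ and $v[t+1] \in \Omega^c$, because the opposite assignment $v[t] \in \Omega^c$, $v[t+1] \in \Omega$ cuts no edge in the charged direction. Hence finite-value cuts of the unfolded network are merely \emph{monotone}---each node may switch from $\Omega^c$ to $\Omega$ at most once---and need \emph{not} be time-invariant. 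Since time-invariant cuts are a strict subset of monotone cuts, the unfolded min-cut is a priori \emph{smaller} than $T$ times the time-invariant value, which is the wrong direction for you: the full-duplex result \eq{lindetcap} applied to the unfolded network would then certify a rate possibly below the claimed expression. Moreover, at a transition slot the contribution is $\rank \left( \Lambda_{\Omega_t \cap \cT(\bv m(t)),\, \Omega_{t+1}^c \cap \cR(\bv m(t))} \right)$ with $\Omega_t \subsetneq \Omega_{t+1}$, which can be strictly smaller than either time-invariant contribution. The standard repair is a ``wiggling cuts'' argument: a monotone cut has at most $|V|$ transitions, so if the schedule is arranged periodically with short period, every long constant stretch samples the modes with frequencies close to $q$, and the total deficit from transition slots is $O(|V| k \log |\F|)$, which vanishes after normalizing by $T \to \infty$. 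This lemma is what you must actually prove; it is not automatic from the presence of memory edges.

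The paper sidesteps this issue entirely with a different reduction (that of Theorem 8.3 in \cite{ADT11}): rather than unfolding the network into $T$ layers of node copies, it keeps the node set $V$ fixed and enlarges the alphabet from $\F^k$ to $\F^{kM}$, replacing each channel matrix $G_{u,v}$ by an $Mk \times Mk$ block-diagonal matrix whose $\ell$th block equals $G_{u,v}$ if link $(u,v)$ is active (i.e., $u$ transmitting and $v$ receiving) in the $\ell$th slot of the rational schedule, and $\bv 0$ otherwise. In this parallel full-duplex network a cut is by construction a single subset $\Omega \subseteq V$---there are no per-slot copies to assign---and the rank of the block-diagonal transfer matrix splits automatically as $M \sum_{\bv m} q(\bv m) \rank \left( \Lambda_{\Omega \cap \cT(\bv m),\, \Omega^c \cap \cR(\bv m)} \right)$. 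Both directions of \eq{hdlindetcap} then follow by converting coding schemes back and forth between the half-duplex network and this parallel network (inserting delays to preserve causality), so converse and achievability are imported simultaneously from \eq{lindetcap}. If you wish to keep the unfolding route you must supply the monotone-cut/vanishing-transitions lemma; otherwise the block-diagonal construction is the cleaner path to the theorem.
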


\begin{proof}[Proof Sketch]
Consider a half-duplex linear deterministic network with a $k \times k$ transfer matrix $G_{u,v}$ for each link $(u,v)$.
Now, among all probability distributions $q$ on the mode configuration vectors $\bv m \in \{0,1\}^{|V|}$, consider those with rational entries of the form $q = (q_1/M,q_2/M,...,q_{2^{|V|}}/M)$ where $q_i$s and $M$ are integers.
For each such distribution, we can construct a full-duplex network where, instead of picking a transmit signal from $\F^k$, each node picks a transmit signal from $\F^{kM}$, and each $G_{u,v}$ is replaced by a $Mk \times Mk$ block diagonal matrix $\tilde G_{u,v}$ with $M$ $k \times k$ matrices, where the $\ell$th one is
\begin{itemize}
\item $G_{u,v}$ if $u \in \cT(\bv m)$ and $v \in \cR(\bv m)$, where $\bv m$ is the mode configuration vector whose probability under $q$ is $q_j/M$ and $\sum_{i = 1}^{j-1} q_i < \ell \leq \sum_{i = 1}^{j} q_i$,
\item $\bv 0_{k\times k}$ otherwise.
\end{itemize}
For each of these full-duplex networks, if we let $\tilde \Lambda_{A,B}$ be the transfer matrix between the vector of transmit signals $\bX_A$ and the vector of received signals $\bY_B$ and $\Lambda_{A,B}$ be the transfer matrix between the vector of transmit signals $\bX_A$ and the vector of received signals $\bY_B$ in the original half-duplex network (ignoring half-duplex constraints), its capacity can be seen to be
\aln{ 
\min_{\Omega} \rank \left(\tilde \Lambda_{\Omega,\Omega^c}\right) =
\min_{\Omega}  \sum_{j=1}^{2^{|V|}} q_j \, \rank \left(\Lambda_{\Omega \cap \cT(\bv m),{\Omega^c \cap \cR(\bv m)}}\right) = 
\min_{\Omega} M \sum_{\bv m} q(\bv m) \rank \left(\Lambda_{\Omega \cap \cT(\bv m),{\Omega^c \cap \cR(\bv m)}}\right).
}
%

Next, notice that for any coding scheme for the half-duplex network, it is possible to define a probability distribution $q(\bv m) = n^{-1}|\{ t : Tx[t] = \cT(\bv m) \}|$ on the mode configuration vectors.
Thus, if $M$ is large enough, it is not difficult to see that a given coding scheme can be replicated on one of the full-duplex networks we considered (using, of course, a blocklength $M$ times smaller).
Conversely, a coding scheme for one of these full-duplex networks can be converted into a coding scheme for the original half-duplex network (where we may have to delay signals to preserve causality).
This allows us to conclude that the capacity of the original half-duplex linear deterministic network is given by \eq{hdlindetcap}.
\end{proof}
%
%
%
%
%
%
%
%
%
%
%

Based on Theorem \ref{hdlindetthm},  
we can write a linear program similar to Problem 2 that characterizes capacity of linear deterministic networks.
However, solving it is difficult due to the exponential number of different mode configurations.
To find an equivalent optimization problem of lower time complexity, as we did in Section \ref{mainsec}, we consider finding subsets $V_1,...,V_k$ satisfying properties P1 and P2.
Then we consider rewriting Problem 3 as follows.


%
%

\vspace{3mm}

{\noindent \bf Problem 3':}
\begin{align}
\label{eq:optim_prob3ld}
\underset{R, \{q_i\}_{i=1}^k}{\text{maximize}}& \ R    \\
\text{subject to}  &   \nonumber \\
\label{prob3ld:c0}
&   R < \sum_{i=1}^{N(\Omega)} \sum_{\mathbf{m}\in \{0,1\}^{|V_{r(\Omega,i)}|  }} 
	q_{r(\Omega,i)}\left( \mathbf{m} \right) 
		\cdot
	\rank \left( \Lambda_{\Omega \cap \cT(\Omega,i,\mathbf{m}), \Omega^c \cap \cR(\Omega,i,\mathbf{m})} \right), 
	\nonumber \\
	& \quad \quad \quad  \forall \, \Omega \in 2^V: S \in \Omega, D \not\in \Omega, \\
\label{prob3ld:c1}
	&  0 \le q_i\left(\mathbf{m}\right), \mathbf{m}\in \{0,1\}^{|V_i|},  
		\ \text{ for } i=1,2 \ldots, k \\
\label{prob3ld:c2}
	&  \sum_{\mathbf{m}\in \{0,1\}^{|V_i|}} 
		q_i\left(\mathbf{m}\right) = 1,
			\ \text{ for } i=1,2, \ldots, k \\
\label{prob3ld:c3}
	&  \sum_{\mathbf{m}\in \{0,1\}^{|V_i|} :\, (m_j)_{j\in V_i \cap V_l}=\mathbf{m}_1} 
			q_{i}\left(\mathbf{m}\right) 
		=
		\sum_{\mathbf{m}\in \{0,1\}^{|V_l|} :\, (m_j)_{j\in V_i \cap V_l}=\mathbf{m}_1} 
			q_{l}\left(\mathbf{m}\right)  \\
& \quad \quad \text{ for all } \mathbf{m}_1 \in \{0, 1\}^{|V_i \cap V_l|}, \text{ for } i\ne l \text{ and } i, l \in \{1,2, \ldots, k\}, \nonumber
\end{align}

Showing that Problem 3' indeed is equivalent to finding (\ref{hdlindetcap}) is done by following  similar steps to those in Section \ref{equivproof}.
Moreover, the analysis of the time complexity of solving Problem 3' can follow the steps in Section \ref{computproof}, by noticing that the new function 
\aln{
F(\Omega, q_1,\ldots,q_k) = \sum_{i=1}^{N(\Omega)} \sum_{\mathbf{m}\in \{0,1\}^{|V_{r(\Omega,i)}|  }} 
	q_{r(\Omega,i)}\left( \mathbf{m} \right) 
		\cdot 
		\rank \left( \Lambda_{\Omega \cap \cT(\Omega,i,\mathbf{m}), \Omega^c \cap \cR(\Omega,i,\mathbf{m})} \right),
}
is also a submodular function of $\Omega$ and can be computed in polynomial time on $|V| 2^{\max_i |V_i|}$ for any $\Omega$ and $\{q_i\}_{i=1}^k$.
This yields the following result.

\begin{thm}
\label{thmld}
The capacity of a half-duplex linear deterministic network with node subsets $V_1,...,V_k$ satisfying P1 and P2 can be found by solving Problem 3'.
Moreover, Problem 3' can be solved in a time complexity that is polynomial in $|V| 2^{\max_i |V_i|}$.
\end{thm}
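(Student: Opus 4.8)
The plan is to mirror the two-part structure of the proof of \Tref{mainthm}, replacing the conditional mutual information by the rank of the transfer matrix throughout. First I would establish that Problem 3' is equivalent to the linear program that computes the capacity expression \eq{hdlindetcap} of \Tref{hdlindetthm}, and then I would argue that Problem 3' is polynomially solvable by invoking \Pref{prop:convopt}.

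For the equivalence, the crucial observation is that the rank function decomposes over the connected components of the cut graph $G_\Omega$ exactly as the mutual information did in \eq{eq:I_split}. Concretely, for a fixed cut $\Omega$ and mode configuration $\bm$, the transfer matrix $\Lambda_{\Omega \cap \cT(\bm), \Omega^c \cap \cR(\bm)}$ is, after a suitable permutation of rows and columns, block diagonal with one block per connected component $G_{\Omega,i}$, since no edge of $G_\Omega$ joins nodes in distinct components. Because the rank of a block-diagonal matrix is the sum of the ranks of its blocks, we obtain
\[
\rank\left(\Lambda_{\Omega \cap \cT(\bm), \Omega^c \cap \cR(\bm)}\right) = \sum_{i=1}^{N(\Omega)} \rank\left(\Lambda_{\Omega \cap V_{\Omega,i} \cap \cT(\bm), \Omega^c \cap V_{\Omega,i} \cap \cR(\bm)}\right),
\]
and each summand depends only on the modes of the nodes in $V_{\Omega,i}$. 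By P1 this component is contained in some $V_{r(\Omega,i)}$, so averaging over $q$ and collapsing the sum onto the marginal $q_{r(\Omega,i)}$ reproduces constraint \eq{prob3ld:c0}; this is the verbatim analogue of the chain of equalities ending in \eq{eq:equivalents_of_I_2}. The remaining direction is identical to Section \ref{equivproof}: any joint $q$ yields marginals satisfying \eq{prob3ld:c1}--\eq{prob3ld:c3}, while P2 guarantees that any consistent marginals extend to a joint distribution, so the two optimizations attain the same value.

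For the time complexity, I would reuse the machinery of Section \ref{computproof} with $F(\Omega,q_1,\ldots,q_k)$ now defined using the rank. Convexity of the feasible set follows as in \Lref{lem:K_convex}, since $F$ remains a linear (hence concave) function of $(q_1,\ldots,q_k)$ and all other constraints are linear. Polynomial computability in the sense of \Dref{def:1} reduces, as in \Lref{lem:separation}, to evaluating the infeasibility measure; every term is linear except the separating term $R - \min_\Omega F(\Omega,q_1,\ldots,q_k)$, so the only substantive requirement is that this minimization be solvable in time polynomial in $|V|2^{\max_i|V_i|}$. Polynomial growth and polynomial boundedness of the feasible set hold by the same argument as before, since the $q_i$ are probability vectors and $R$ is bounded by the full-duplex rate $R_{\max}$.

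The main obstacle, and the only place where the linear deterministic structure must be genuinely invoked, is establishing that $\rank\left(\Lambda_{\Omega \cap \cT(\Omega,i,\bm), \Omega^c \cap \cR(\Omega,i,\bm)}\right)$ is a submodular function of $\Omega$ for each fixed component $i$ and mode $\bm$; granting this, $F(\Omega,q_1,\ldots,q_k)$ is a nonnegative combination of submodular functions and hence submodular, so the minimization over $\Omega$ can be performed in polynomial time by submodular function minimization, exactly as in \cite{PE11}. This submodularity is the deterministic counterpart of the submodularity of conditional mutual information used in \Tref{mainthm}, and it follows from the general result of \cite{PE11}, since the uniform i.i.d. input distribution over $\F$ makes the output vectors conditionally independent given the inputs. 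The evaluation of each individual rank is itself polynomial, via Gaussian elimination over the prime field $\F$. With submodularity and polynomial evaluation in hand, \Pref{prop:convopt} applies and yields the claimed time complexity that is polynomial in $|V|2^{\max_i|V_i|}$.
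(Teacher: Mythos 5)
Your proposal is correct and follows essentially the same route as the paper, whose proof of \Tref{thmld} consists precisely of repeating the arguments of Sections \ref{equivproof} and \ref{computproof} with the conditional mutual information replaced by the rank of the cut transfer matrix, and noting that the resulting $F(\Omega,q_1,\ldots,q_k)$ remains submodular in $\Omega$ and computable in time polynomial in $|V|2^{\max_i |V_i|}$. The block-diagonal decomposition of the rank over the connected components of $G_\Omega$ and the identification $\rank\left(\Lambda_{A,B}\right) = I\left(\bX_A;\bY_B \mid \bX_{A^c}\right)$ under uniform i.i.d.\ inputs (which delivers submodularity via \cite{PE11}) are exactly the details the paper leaves implicit in its sketch.
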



Another class of networks to which the results in this paper can be extended is the class of determinisitc relay networks with state information from \cite{DeterministicWithStates}.
Each node in these networks receives a function of the transmit signals at all the other nodes and of a random network state, and this state is supposed to be known at the destination.
In the case where, given the network state, the received signals are linear functions of the transmit signals, it is shown in \cite{DeterministicWithStates} that the capacity is given by 
\al{ 
\min_{\Omega} E \left[ \rank \left( \Lambda_{\Omega} \right) \right],
}
generalizing the result in (\ref{lindetcap}).
It is not difficult to see that, in this case, analogous results to those in Theorems \ref{hdlindetthm} and \ref{thmld} can be obtained.

\section{Numerical Results}
\label{sec:simulations}
In this section we take a step further and numerically evaluate the performance of our algorithms. We show that with our proposed approach we can evaluate the half-duplex cut-set bound and determine the associated half-duplex schedules for networks of larger size than alternative algorithms. In addition, we show that by optimizing the half-duplex schedules we can approximate the performance of full-duplex networks.
We focus on a Gaussian layered relay network, where the channel gains are randomly chosen with i.i.d. $\cC\cN(0,{P})$ distribution. This allows us to use \Tref{thm:layered} to solve Problem 3 for various network sizes as we vary the number of layers $L$ keeping the number of nodes in each layer $|\cV_l|$ constant for $l=2,\ldots, L-1$. While in previous sections we assumed a real Gaussian channel model, all the results can be easily extended to the complex case which we consider in this section, which is of practical relevance for baseband models of wireless systems.

\begin{figure}[ht] \center
\includegraphics[width=12.6cm]{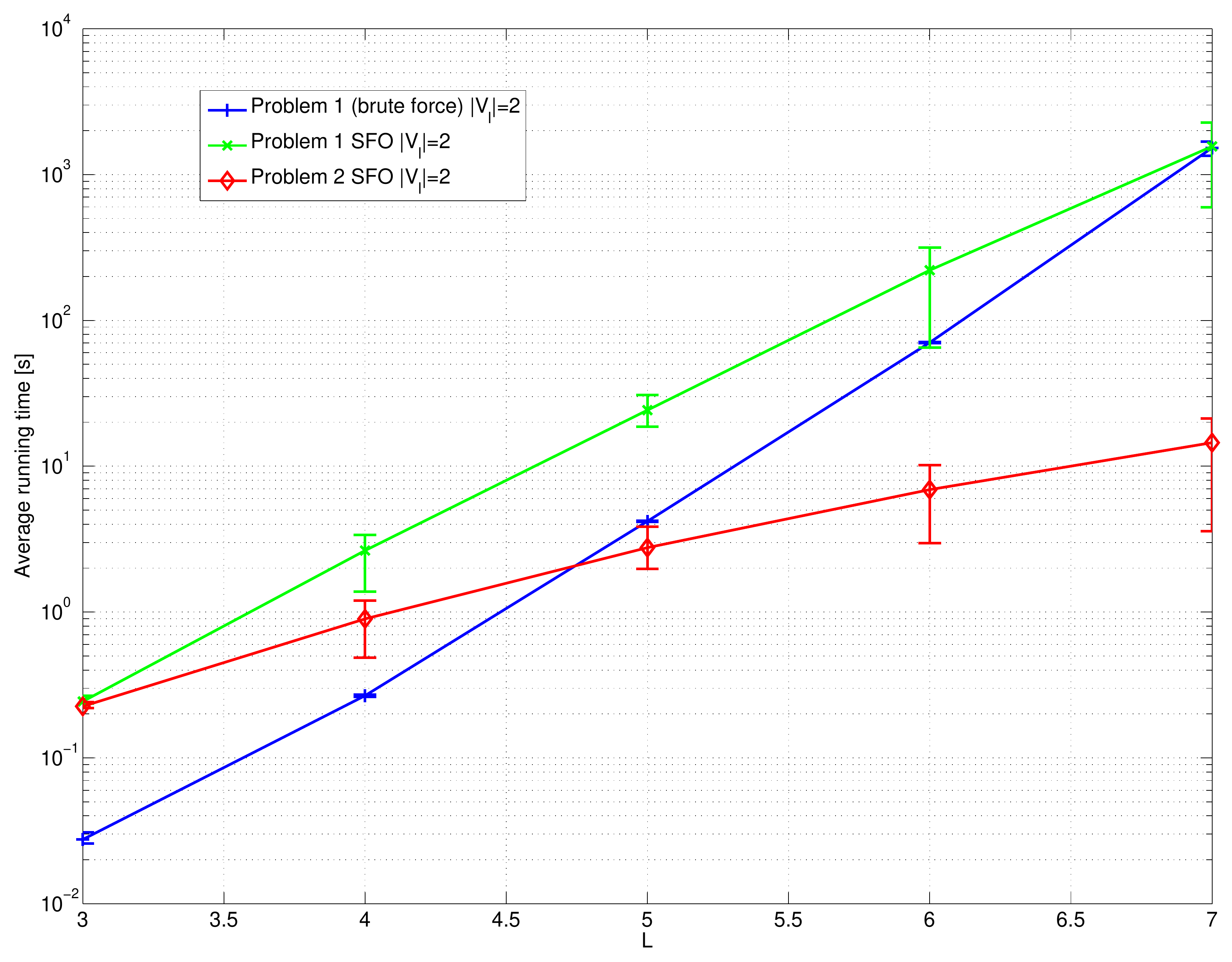}
\caption{Average running times of the algorithms that solve Problem 2 by brute force and by using submodular optimization (SFO) compared to that of the algorithm that solves Problem 3 using \Tref{thm:layered} as a function of the number of layers $L$ with $|\cV_l|=2$ nodes per layer. Also shown are error bars indicating the minimum and maximum running times over 10 random realizations of the network.
}
\label{fig:slopes}
\end{figure}

In order to set a baseline for comparison, for networks of small size, Problem 2 is solved using Matlab's linear optimization function \texttt{linprog}. On the other hand, we solve Problem 3 efficiently using \Tref{thm:layered} and exploiting submodularity with the method proposed by Fujishige \cite{Fuj05,FHI06} and the Matlab package SFO \cite{sfo}. In addition, in order to evaluate the improvements obtained from the reduction in the number of optimization variables in going from Problem 2 to Problem 3, we also solve Problem 2 exploiting submodularity with the SFO package. 


Fig. \ref{fig:slopes} shows the average running times of the various algorithms as the number of layers $L$ increases for fixed number of relays $|\cV_l|=2$ in each layer. The average running time is computed over 10 random realizations of the channel gains drawn i.i.d.\,with distribution $\cC\cN(0,1)$.
We first note that the brute force approach of solving Problem 2 with \texttt{linprog} leads to smaller running times than the approach that uses
SFO. This may be explained by noting that Problem 2 has exponential complexity on the number of nodes, whether the optimization is solved by brute force or by submodular optimization methods. While SFO may simplify the computation of the cut-set constraints, it introduces overheads which grow with the number of variables involved. Fig. \ref{fig:slopes} shows that these overheads dominate the running times for the values of $L$ in the plot, but due to the smaller slope of the curve, the solution with submodular optimization should be faster for large enough $L$. The overheads of SFO also explain why the algorithm that solves Problem 2 by brute force via \texttt{linprog} exhibits smaller running times than the algorithm that solves Problem 3 for networks with 7 or fewer nodes. However, as the number of nodes in the network grow the running times of the algorithm implementing Problem 3 become significantly smaller ($\approx 1/100$ for $L=7$) than those of the other algorithms.  

\begin{figure}[ht] \center
\includegraphics[width=14cm]{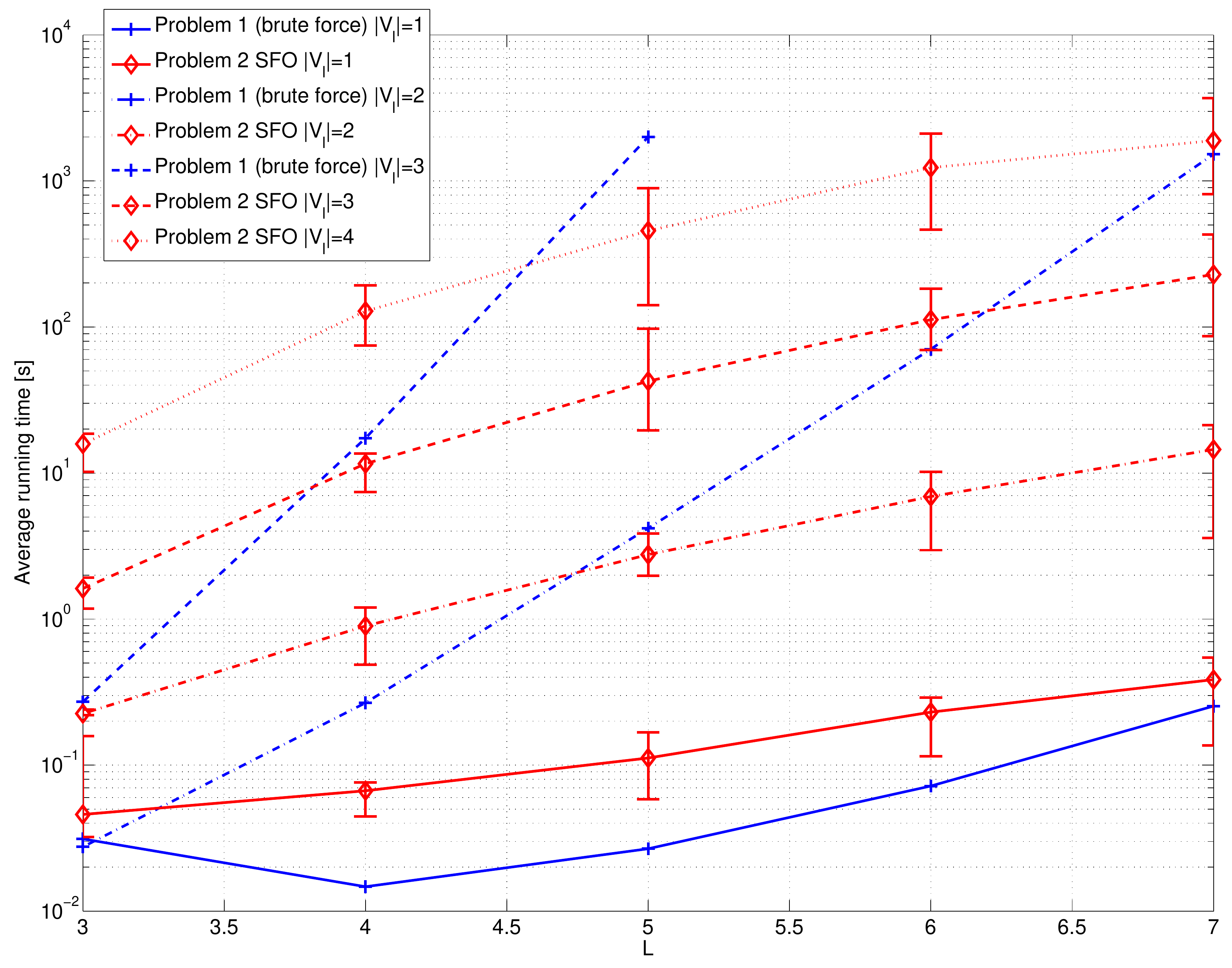}
\caption{Average running times the algorithm that solves Problem 2 by brute force compared to that of the algorithm that solves Problem 3 using \Tref{thm:layered} as a function of the number of layers $L$ for different number of nodes per layer $|\cV_l|$. The average running times are computed over 10 runs with random channel gains. Also shown are error bars indicating the minimum and maximum running times for the 10 runs of the algorithm that solves Problem 3. The algorithm that solves Problem 3 has smaller running times for networks of moderate and large size.}
\label{fig:timing1}
\end{figure}

Figure \ref{fig:timing1} shows the average running times of the algorithms that solve Problem 2 by brute force and Problem 3, as the number of layers $L$ increases for fixed number of relays $|\cV_l|$ in each layer. As before, the average running times are computed over 10 random realizations of the channel gains drawn i.i.d. with distribution $\cC\cN(0,1)$. Since Problem 2 has exponential complexity on the number of nodes, the running times of the algorithm that solves Problem 2 are only given for networks of small size. As seen in the figure, the algorithm implementing Problem 3 not only has smaller running times for networks of moderate size, but also it allows to solve problems for networks of large size for which the other algorithms become impractical due to memory and time requirements.

Figure \ref{fig:timing1} also includes the minimum and maximum running times of the algorithm that solves Problem 3 for 10 random realization of the channel gains. The ratio of the maximum to the minimum running time for each network size is approximately constant for various network sizes and is smaller than 10 in all cases shown in the figure.

\begin{figure}[ht] \center
\includegraphics[width=14cm]{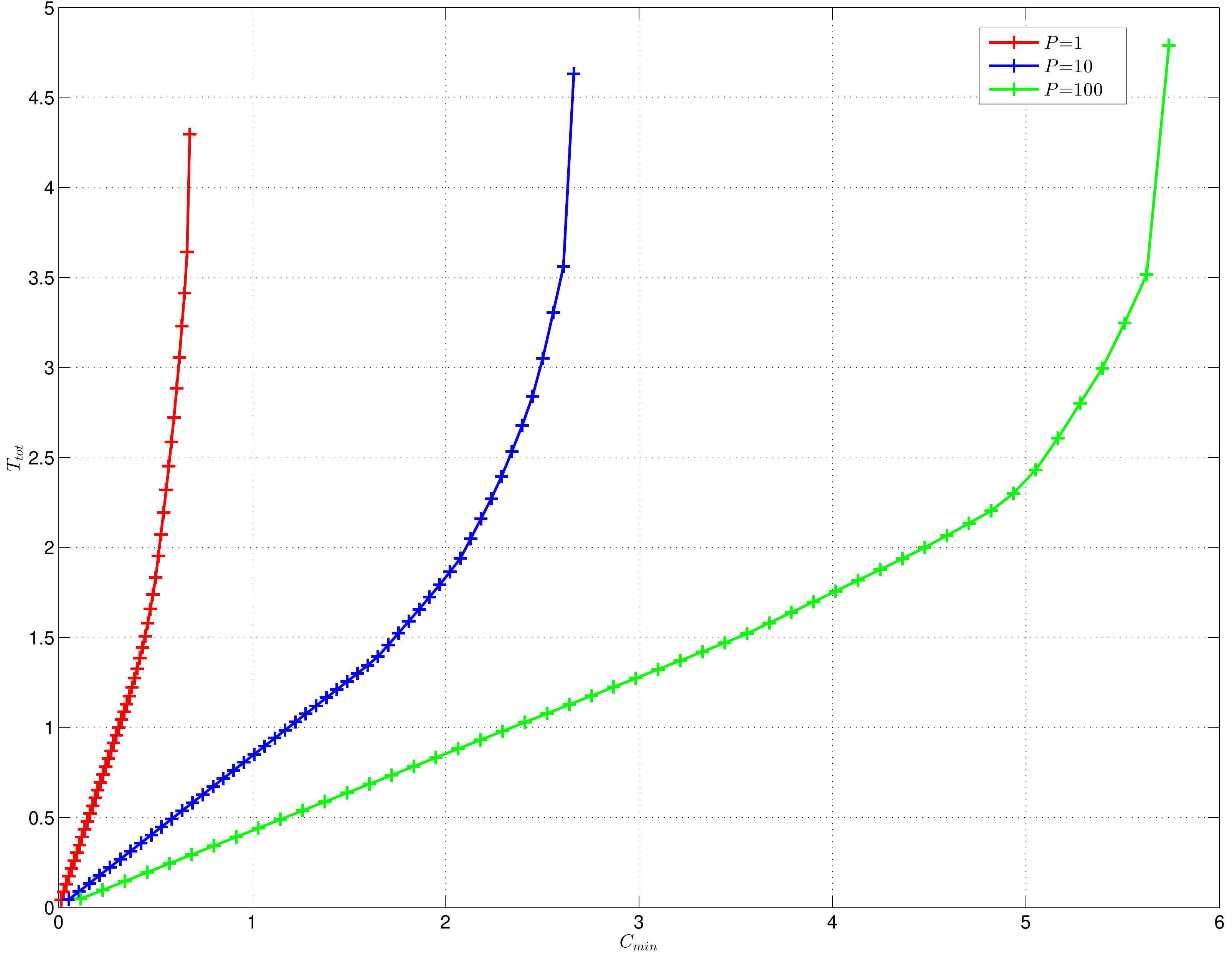}
\caption{Minimum total duty cycle $T_{tot}$ as a function of the minimum required cut-set capacity $C_{min}$ for random networks with $L=4$ layers and $|\cV_l|=3$ nodes per layer. The random channel gains are drawn i.i.d. with $\cC\N(0,{P})$ distribution, with ${P}=1,10,100$. The figure shows that for $C_{min}$ small compared to the half-duplex cut-set capacity the total duty cycle (and hence the power consumption) increases linearly with $C_{min}$. As $C_{min}$ approaches the half-duplex cut-set capacity the total duty cycle increases abruptly.}
\label{fig:PvsC0}
\end{figure}

As discussed in Section \ref{settingsec}, we can choose $\mu_1$, $\mu_2$ and $C_{min}$ in Problems 2 and 3 to find schedules optimizing different objectives. For example, by choosing $\mu_1=0$ and $\mu_2=1$ and we can find the minimum required sum of duty cycles $T_{tot}$ (which is proportional to the required energy consumption of the network) to achieve a given minimum cut-set capacity $C_{min}$. Figure \ref{fig:PvsC0} shows this energy consumption as a function of $C_{min}$ for networks with 4 layers and 3 relays per layer. We study different SNR regimes by changing the variance ${P}$ of the distribution used to generate the channels gains. 

We observe in Figure \ref{fig:PvsC0} that for $C_{min}$ sufficiently smaller than the half-duplex cut-capacity of the network, $T_{tot}$ increases linearly with $C_{min}$. For sufficiently large $C_{min}$ the required total duty cycle $T_{tot}$ increases faster as $C_{min}$ increases. This can be explained by noting that for small $C_{min}$ the optimal schedules only use the relays that have large channel gains, and their corresponding duty cycle is proportionally increased as $C_{min}$ increases. As $C_{min}$ becomes a significant fraction of the half-duplex cut-capacity, more and more nodes need to be activated by making them transmit a positive fraction of time. These additional nodes provide only marginal rate improvements since they do not increase the degrees of freedom of the network and their associated channel gains are likely small compared to those of the nodes activated first. This results in larger increase in total duty cycle (and power consumption) for the same increase of $C_{min}$ as $C_{min}$ approaches the half-duplex cut-capacity.

We next evaluate the benefits of using optimal half-duplex schedules (from solving Problem 3) compared to simpler scheduling approaches for networks with $L=4$ layers, $|\cV_l|=4$ nodes per layer, and i.i.d. $\cC\cN(0,{P})$ channel gains. 
We compare the ratio between the half-duplex and full-duplex cut-set bound (averaged over 10 different random network realizations) for three half-duplex scheduling techniques:
\begin{itemize}
\item Optimized: the half-duplex schedule that results from solving Problem 2
\item Na\"{i}ve: time is divided in timeslots of equal length. 
The nodes in even layers transmit in even timeslots and receive in odd timeslots, while the nodes in odd layers transmit in odd timeslots and receive in even timeslots.
\item Simple random: the network is divided into 2 subnetworks, where half of the nodes in each relay layer are randomly assigned to subnetwork 1, while the other half is assigned to subnetwork 2. The source and destination nodes are part of both subnetworks. Each subnetwork is operated using the na\"{i}ve half-duplex schedule with phase offset of 1 timeslot. That is, half of the nodes in each relay layer transmit at a given time while the other half receive at that time. The source node is always in transmit mode, while the destination node is always in receive mode.

\end{itemize}
\begin{figure}[ht] \center
\includegraphics[width=14cm]{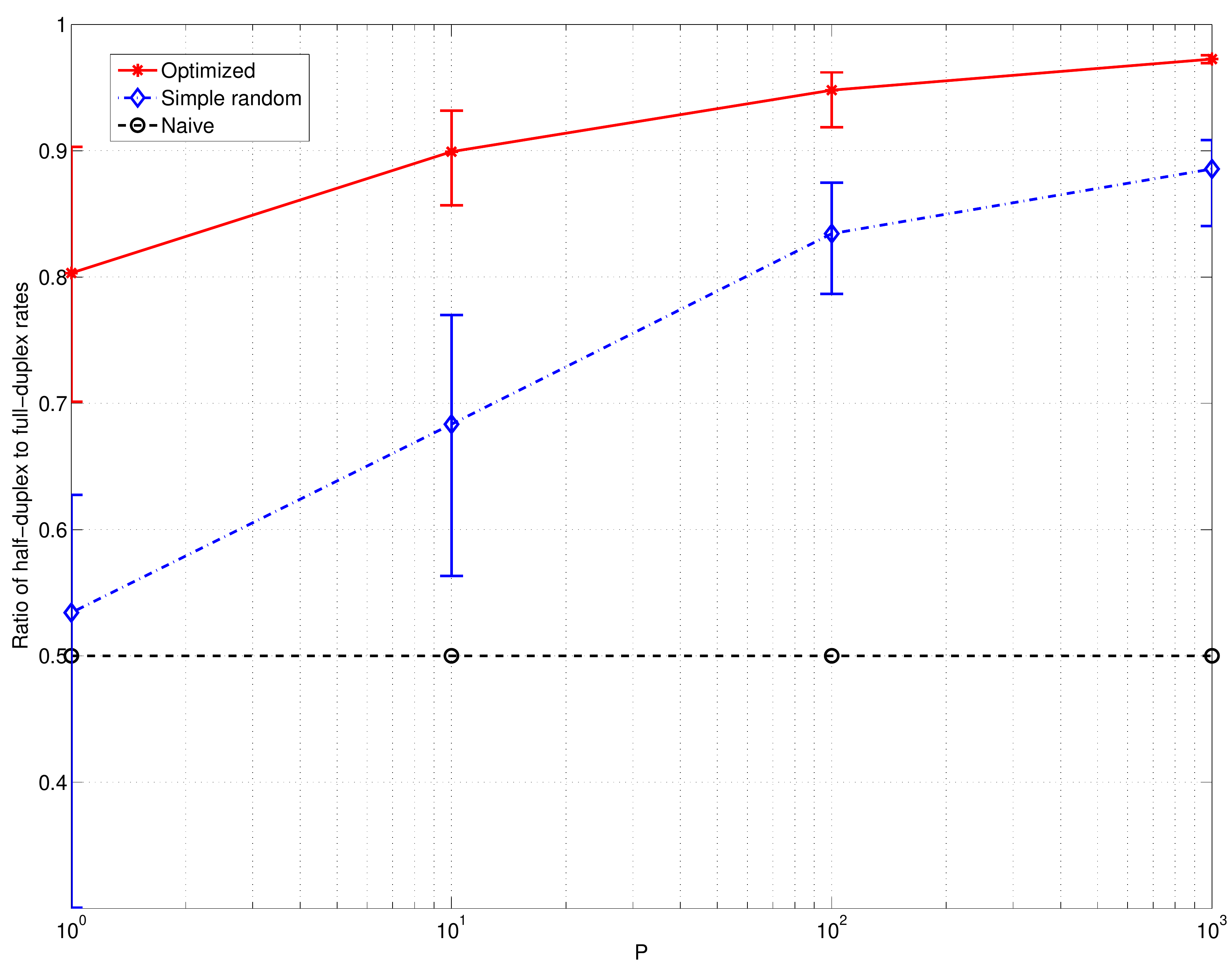}
\caption{Ratio of half-duplex to full-duplex rate for various scheduling approaches as a function of the signal-to-noise ratio $P$. Also shown are error bars indicating the minimum and maximum half-duplex to full-duplex ratios for 10 random realizations of the network.
}
\label{fig:HdFd}
\end{figure}

Fig. \ref{fig:HdFd} shows that the optimized half-duplex schedule allows to obtain a large fraction of the full-duplex performance, which increases with the SNR (represented by ${P}$). 
The simple random schedule outperforms the na\"{i}ve schedule in most cases, but does not match the performance of the optimized schedule for small and moderate SNR. At large SNR, the DoF limits the rates, and the simple random schedule (which is DoF optimal) approaches the performance of the optimized schedule. Note that the curve corresponding to the na\"{i}ve half-duplex schedule is a constant 1/2 due to the fact that the cut-set bound with independent encoding at the nodes for the na\"{i}ve half-duplex schedule is exactly half of the full-duplex cut-set bound. 
We also point out that, as the SNR tends to infinity, the ratio between half-duplex and full-duplex capacity approaches $1$ due to the fact that the DoF in both the half-duplex and full-duplex cases is $1$ whenever the number of nodes per layer is at least $2$.


\newpage

\section{Conclusions}
\label{sec:conc}
We showed that for some classes of Gaussian relay networks an approximation to the half-duplex cut-set bound can be efficiently computed. This computation not only gives a capacity characterization within a constant gap, but also provides ``good'' half-duplex schedules which can be used in communication schemes. By varying some parameters in the optimization program it is possible to obtain schedules that tradeoff rate for lower network power consumption. It was shown by simulations that it may be beneficial to operate the network at rates significantly below the cut-set bound in order to achieve better power efficiency. 

Whenever the complexity of implementing half-duplex schedules is a concern, as is the case in fading channels, it may be of interest to implement simple half-duplex schedules involving few timeslots. The performance of these simple schedules can be evaluated and further optimized using the cut-set bound as a metric, and compared to what is achievable under optimal scheduling using the techniques that we presented. 

We proposed a heuristic approach based on tree decompositions to find the node groupings used in our main theorem. We gave two examples of network classes for which the required tree decompositions are easy to compute. A possible future research direction is to find other network classes leading to efficient tree decompositions (i.e. with small cardinality nodes) where our results can be applied. In addition, it would be of interest to design algorithms for computing these efficient tree decompositions.


%
%
%
%
%

\bibliographystyle{unsrt}



\end{document}